\definecolor{plot-uni}{HTML}{4daf4a}
\definecolor{plot-coast}{HTML}{377eb8}
\definecolor{plot-cluster}{HTML}{e41a1c}
\DeclarePairedDelimiter\set{\{}{\}}
\DeclarePairedDelimiter\abs{\lvert}{\rvert}
\DeclarePairedDelimiter\croc{\langle}{\rangle}
\DeclareMathOperator{\x}{\text{x}}
\DeclareMathOperator{\y}{\text{y}}
\def\Oh{\ensuremath{\mathcal{O}}}
\def\sumDist{Distance\xspace}
\def\xOffset{XOffset\xspace}
\def\xHop{IndexOffset\xspace}
\def\sleader{\texttt{s}-leader\xspace}
\def\poleader{\texttt{po}-leader\xspace}
\def\sleaders{\texttt{s}-leaders\xspace}
\def\poleaders{\texttt{po}-leaders\xspace}
\def\doleaders{\texttt{do}-leaders\xspace}
\def\spoleaders{\texttt{s}/\texttt{po}-leaders\xspace}
\def\sandpoleaders{\texttt{s}- and \texttt{po}-leaders\xspace}
\def\sorpoleaders{\texttt{s}- or \texttt{po}-leaders\xspace}
\def\sarea{\texttt{s}-area\xspace}
\def\poarea{\texttt{po}-area\xspace}
\def\cT{\ensuremath{\mathcal{T}}}
\newcommand{\lca}[1]{\ensuremath{\text{lca}(#1)}}
\newcommand{\cro}[1]{\ensuremath{\text{cr}(#1)}}
\newtheorem{proposition}[theorem]{Proposition}
\crefname{observation}{Observation}{Observations}
\crefname{proposition}{Proposition}{Propositions}
\begin{document}
\doi{10.7155/jgaa.v29i1.2975}
\Issue{29}{1}{29}{61}{2025} 
\title{Visualizing Geophylogenies -- Internal and External Labeling with Phylogenetic Tree Constraints}
\Ack{Jonathan Klawitter was supported by the Beyond Prediction Data Science Research Programme (MBIE grant UOAX1932). 
Thomas~C.~van~Dijk was partially supported by the DFG grant Di$\,$2161/2-1.
A preliminary version of this paper appeared in the
proceedings of the 12th International Conference on Geographic Information Science (GIScience 2023)~\cite{giscience}.
Implementations of the algorithms and the experiments are
available online at \href{https://www.github.com/joklawitter/geophylo}{github.com/joklawitter/geophylo}.}
\authorOrcid[first]{Jonathan~Klawitter}{jo.klawitter@gmail.com}{0000-0001-8917-5269}
\authorOrcid[second]{Felix~Klesen}{}{0000-0003-1136-5673}
\author[third]{Joris~Y.~Scholl}{}
\authorBOrcid[four]{Thomas~C.~van~Dijk}{}{0000-0001-6553-7317}
\author[second]{Alexander~Zaft}{}
\affiliation[first]{University of Auckland, Aotearoa New Zealand}
\affiliation[second]{Universität Würzburg, Germany}
\affiliation[third]{Ruhr-Universität Bochum, Germany}
\affiliation[four]{Eindhoven University of Technology, Netherlands}
\HeadingAuthor{J.\ Klawitter et al.}
\HeadingTitle{Visualizing Geophylogenies}

\submitted{Sept. 2024}%
\reviewed{December 2024}%
\revised{January 2025}%
\accepted{March 2025}%
\final{March 2025}%
\published{March 2025}%
\type{Regular paper}%
\editor{Sabine Cornelsen}%

\maketitle        

\pdfbookmark[1]{Abstract}{Abstract} 
\begin{abstract}
A \emph{geophylogeny} is a phylogenetic tree (or dendrogram) where each leaf (e.g.\ biological taxon) has an associated geographic location (site).
To clearly visualize a geophylogeny, the tree is typically represented as a crossing-free drawing next to a map.
The correspondence between the taxa and the sites is either shown with matching labels on the map (internal labeling) 
or with \emph{leaders} that connect each site to the corresponding leaf of the tree (external labeling).
In both cases, a good order of the leaves is paramount for understanding the association between sites and taxa.
We define several quality measures for internal labeling and give an efficient algorithm for optimizing them.
In contrast, minimizing the number of leader crossings in an external labeling is NP-hard.
On the positive side, we show that crossing-free instances can be solved in polynomial time
and give a fixed-parameter tractable (FPT) algorithm.
Furthermore, optimal solutions can be found in a matter of seconds on realistic instances using integer linear programming.
Finally, we provide several efficient heuristic algorithms and experimentally show them to be near optimal on real-world and synthetic instances.
\end{abstract}

\section{Introduction} 
\label{sec:intro}
A \emph{phylogeny} describes the evolutionary history and relationships of a set of taxa
such as species, populations, or individual organisms~\cite{Ste16}.
It is one of the main tasks in phylogenetics to infer a phylogeny for some given data and a particular model.
Most often, a phylogeny is modeled and visualized with a \emph{rooted binary phylogenetic tree}~$T$,
that is, a rooted binary tree $T$ where the leaves are bijectively labeled with a set of $n$ taxa.
For example, the phylogenetic tree in \cref{fig:baseExample:tree} shows 
the evolutionary species tree of the five present-day kiwi (\textit{Apteryx}) species.
The term \emph{dendrogram} is used synonoumsly with phylogenetic tree, where the tree represents a hierarchical clustering. 
These trees are conventionally drawn with all edges directed downwards to the leaves and without crossings (\emph{downward planar}).
There exist several other models for phylogenies such as the more general phylogenetic networks,
which can additionally model reticulation events such as horizontal gene transfer and hybridization~\cite{HRS10},
and unrooted phylogenetic trees, which only model the relatedness of the taxa~\cite{Ste16}.
Here we only consider rooted binary phylogenetic trees
and refer to them simply as phylogenetic trees.

In the field of phylogeography, geographic data is used in addition to the genetic data to improve the inference of the phylogeny.
We may thus have spatial data associated with each taxon of a phylogenetic tree
such as the distribution range of each species or the sampling site of each voucher specimen used in a phylogenetic analysis.
For example, \cref{fig:baseExample:map} shows the distributions of the kiwi species from \cref{fig:baseExample:tree}.
Similarly, dendrograms might arise from hierarchical clustering on locations, e.g.\ cities~\cite{exampleCities} or regions~\cite{exampleCoast}.
We speak of a \emph{geophylogeny} (or \emph{phylogeographic tree})
if we have a phylogenetic tree $T$, a map $R$, and a set $P$ of features in $R$
that contains one feature per taxon of $T$; see \cref{fig:baseExample:geophylo} for a geophylogeny of the kiwi species. 
In this paper, we focus on the case where each element $x$ of $P$ is a point, called a \emph{site}, in $R$,
and only briefly discuss the cases where $x$ is a region, or a set of points or regions.

\begin{figure}[t]
  \centering
    \begin{subfigure}[t]{0.20 \linewidth}
		\centering
 		\includegraphics[page=1]{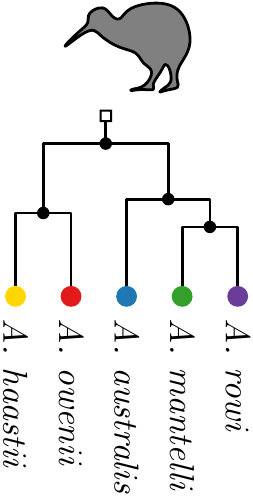}
		\caption{Rooted binary phylogenetic tree.}
		\label{fig:baseExample:tree}
	\end{subfigure}
	\hfill
	\begin{subfigure}[t]{0.30 \linewidth}
		\centering
  		\includegraphics[width=0.908\linewidth]{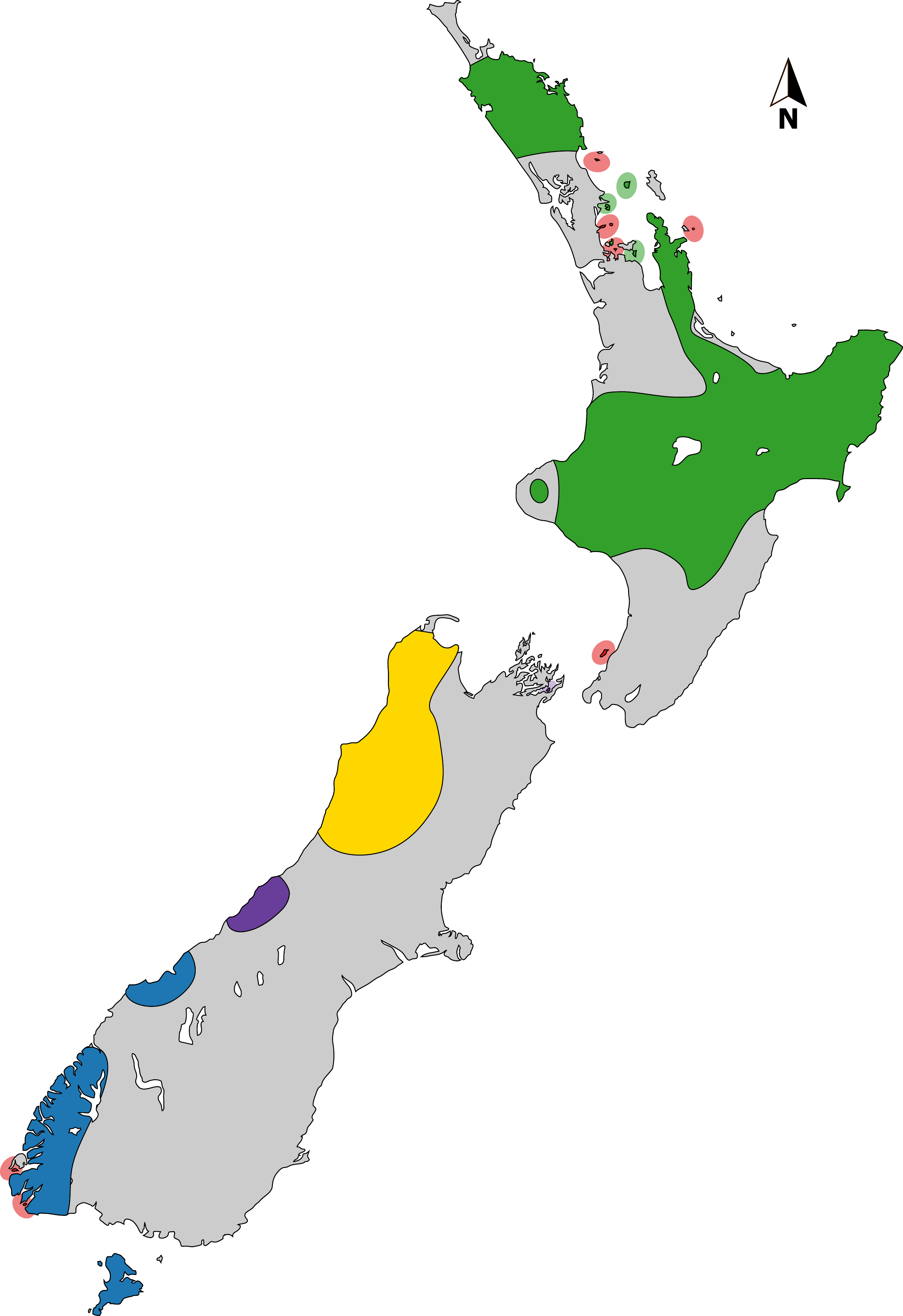}
		\caption{Map with the distribution of each species shown.}
		\label{fig:baseExample:map}
	\end{subfigure}
	\hfill
	\begin{subfigure}[t]{0.40 \linewidth}
		\centering
 		\includegraphics[page=3]{apteryxExample}
 		\hspace{-0.9cm}
 		\includegraphics[height=6cm]{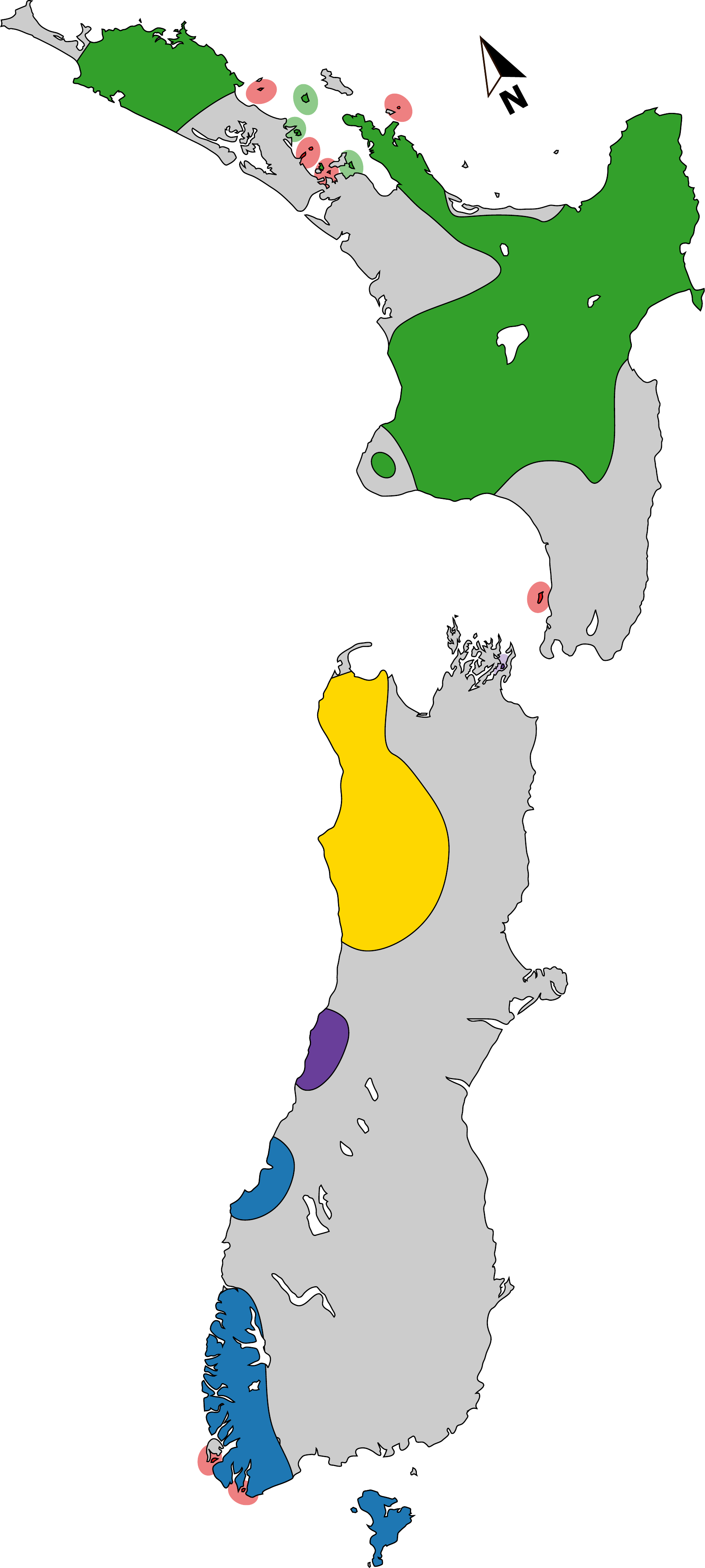}
		\caption{Visualization of the geophylogeny with internal labeling.}
		\label{fig:baseExample:geophylo}
	\end{subfigure}
  \caption{To visualize this geophylogeny of the five present-day 
  kiwi species (Tokoeka/South Island Brown Kiwi -- \textit{Apteryx australis}, Rowi/Okarito Brown Kiwi -- \textit{A. rowi}, 
  North Island Brown Kiwi -- \textit{A. mantelli}, Great Spotted Kiwi -- \textit{A. haastii}, Little Spotted Kiwi -- \textit{A. owenii}), 
  we combine the phylogenetic tree (a) together with the distribution map (b) into a single figure (c). 
  To this end, we may pick a rotation of the map and a placement of the tree
  as well as a leaf order that facilities easy association 
  based on the colors between the leaves and the features on the map. 
  (Phylogeny and map inspired by Weir et~al.\ \cite{exampleKiwi}.)} 
  \label{fig:baseExample}
\end{figure}

\paragraph{Visualizing Geophylogenies.}
When visualizing a geophylogeny, we may want to display its tree and its map together
in order to show the connections (or the non-connections) between the leaves and the sites.
For example, we may want to show that the taxa of a certain subtree
are confined to a particular region of the map or that they are widely scattered.
In the literature, we mainly find three types of drawings of geophylogenies
that fall into two composition categories~\cite{JE12,HSS15}.
In a \emph{side-by-side (juxtaposition)} drawing, 
the tree is drawn planar directly next to the map.
To show the correspondences between the taxa and their sites, 
the sites on the maps are either labeled or color coded (as in \cref{fig:examples:internal} and \cref{fig:baseExample:geophylo}, respectively),
or the sites are connected with \emph{leaders} to the leaves of the tree (as in \cref{fig:examples:external}).
We call this \emph{internal labeling} and \emph{external labeling}, respectively.
There also exist \emph{overlay (superimposition)} illustrations 
where the phylogenetic tree is drawn onto the map in 2D or 3D
with the leaves positioned at the sites~\cite{Xia19,KL08,Rev12}; see \cref{fig:overlay}.
While the association between the leaves and the sites is obvious in overlay illustrations, 
Page~\cite{Pag15} points out that the tree and, in particular, the tree heights might be hard to interpret.

\begin{figure}[t]
	\centering
	\begin{subfigure}[t]{.52 \linewidth}
		\centering
 		\includegraphics[width=1.0\textwidth]{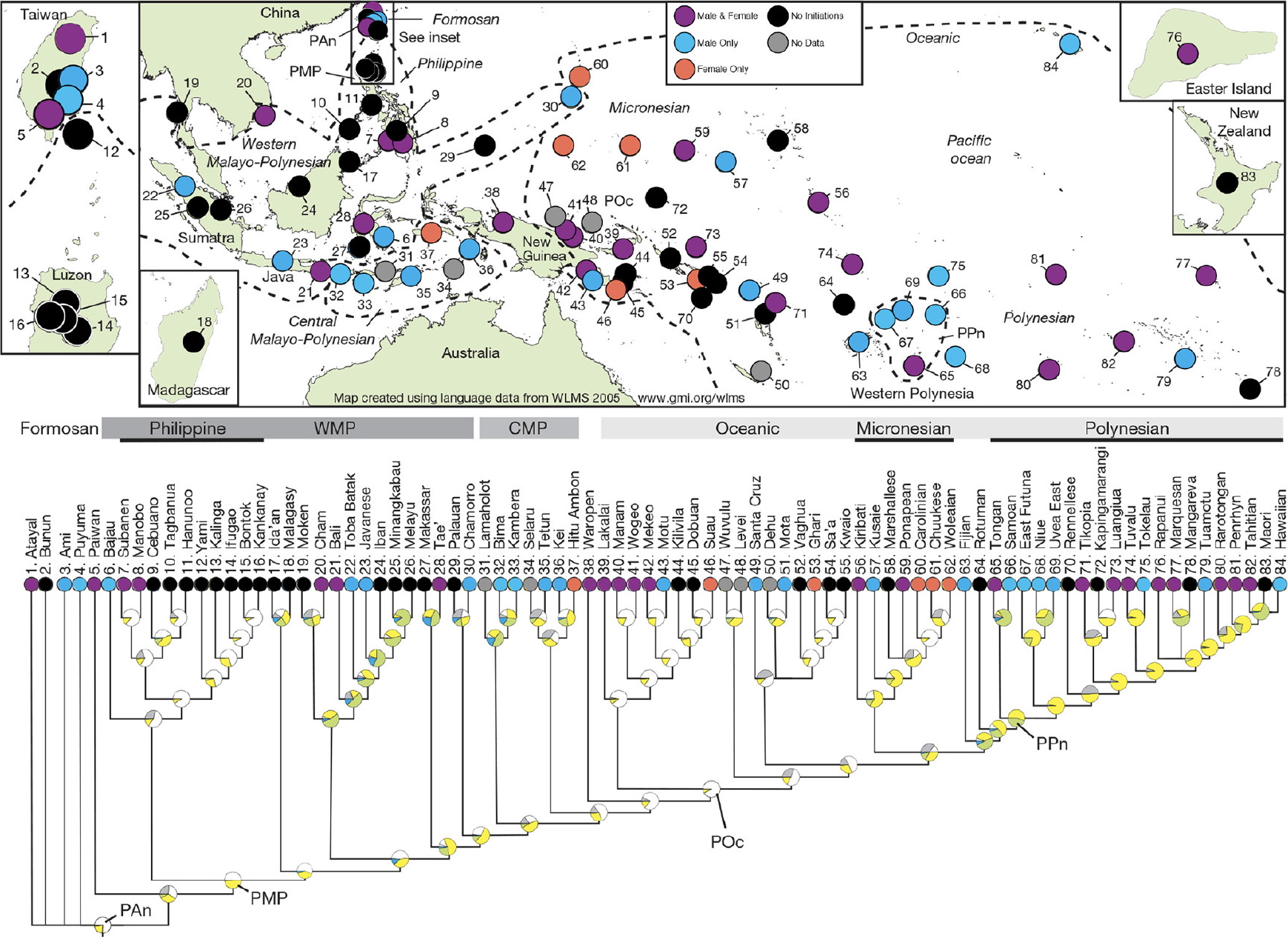}
		\caption{Internal labeling with labels and colors~\cite{internalExample}.}
		\label{fig:examples:internal}
	\end{subfigure}
	\hfill
	\begin{subfigure}[t]{.44 \linewidth}
		\centering
 		\includegraphics[width=1.0\textwidth]{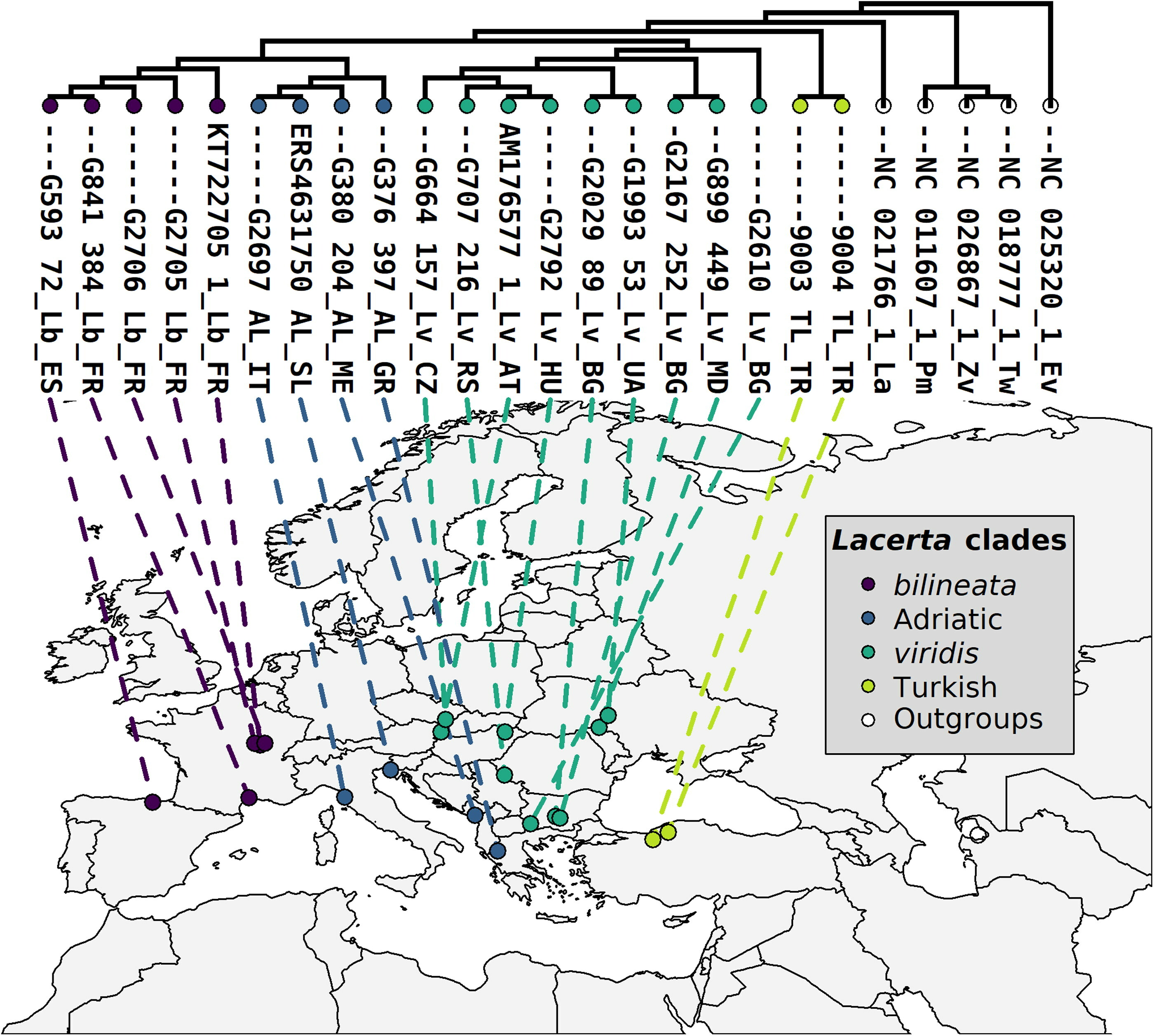}
		\caption{External labeling with \sleaders~~\cite{externalExample}.}		
		\label{fig:examples:external}
	\end{subfigure}
  \caption{Side-by-side drawings of geophylogenies from the literature.} 
  \label{fig:example}
\end{figure}

\begin{figure}[t]
	\centering
	\begin{subfigure}[t]{.48 \linewidth}
		\centering
 		\includegraphics[width=1.0\textwidth]{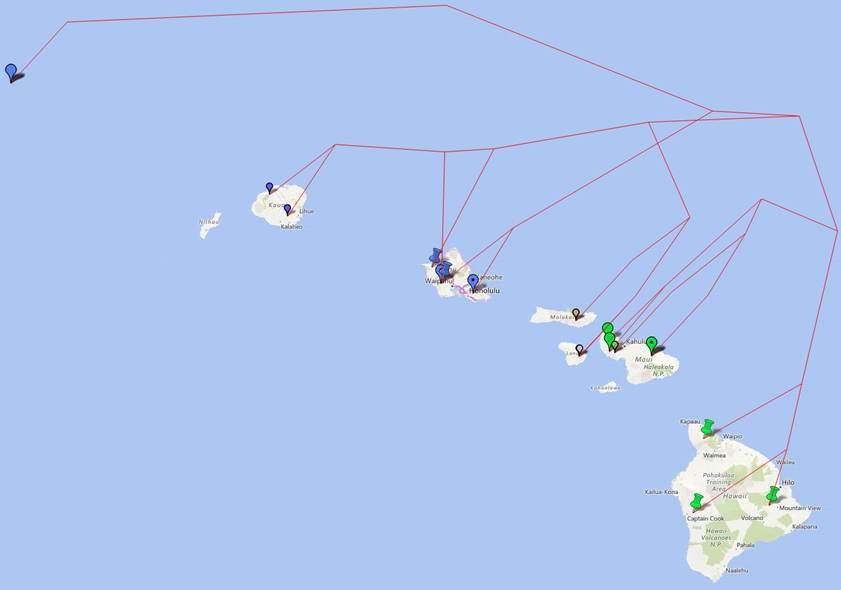}
		\caption{2D overlay drawing of a geophylogeny by Xia~\cite{Xia19}.}
		\label{fig:overlay:two}
	\end{subfigure}
	\hfill
	\begin{subfigure}[t]{.48 \linewidth}
		\centering
 		\includegraphics[width=1.0\textwidth]{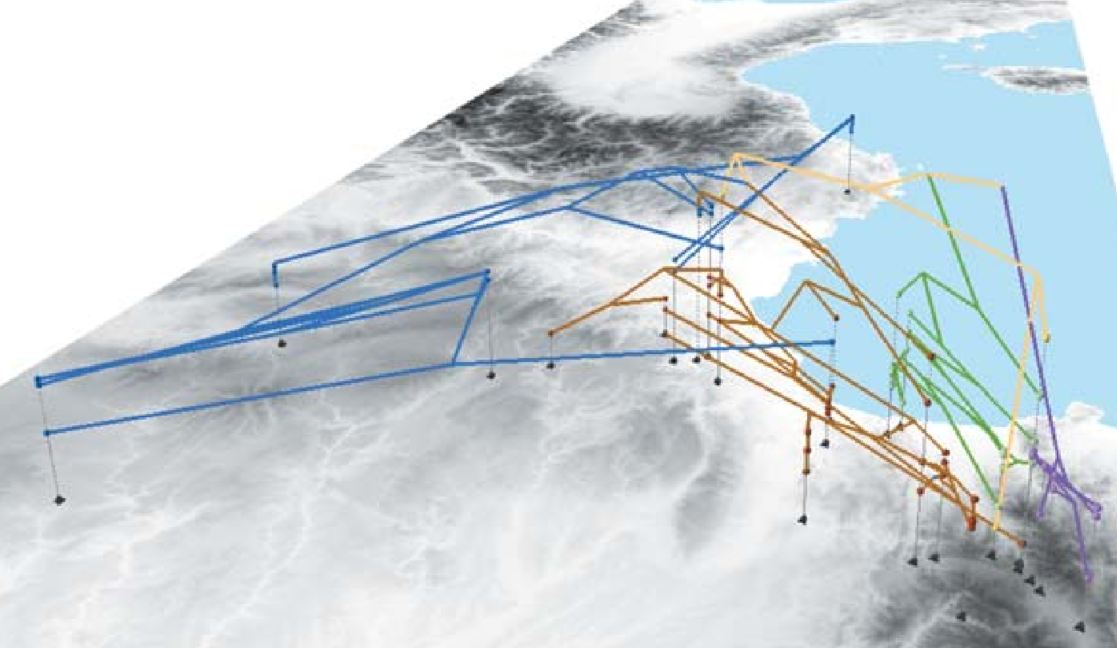}
		\caption{3D overlay drawing of a geophylogeny by Kidd and Liu~\cite{KL08}.}		
		\label{fig:overlay:three}
	\end{subfigure}
  \caption{Overlay drawings of geophylogenies from the literature.} 
  \label{fig:overlay}
\end{figure}

Drawing a geophylogeny involves various subtasks, such as choosing an orientation for the map,
a position for the tree, and the placement of the labels.
Several existing tools support drawing geophylogenies~\cite{PPC+09,PMZ+13,Rev12,Pag15,CPB15},
but we suspect that in practice many drawings are made ``by hand''.
The tools \texttt{GenGIS} by Parks et~al.\ \cite{PPC+09,PMZ+13}, a tool by Page~\cite{Pag15}, 
and the R-package \texttt{phytools} by Revell~\cite{Rev12} 
can generate side-by-side drawings with external labeling.
The former two try to minimize leader crossings by testing random leaf orders 
and by rotating the phylogenetic tree around the map;
Revell uses a greedy algorithm to minimize leader crossings. 		
The R package \texttt{phylogeo} by Charlop-Powers and Brady~\cite{CPB15} uses internal labeling via colors.
Unfortunately, none of the articles describing these tools formally define a quality measure being optimized
or study the underlying combinatorial optimization problem from an algorithmic perspective.
In this paper, we introduce a simple combinatorial definition 
for side-by-side drawings of geophylogenies and propose several quality measures (\cref{sec:prelim}).

\paragraph{Labeling Geophylogenies.}
The problem of finding optimal drawings of geophylogenies can be considered a special case of map labeling.
In this area, the term \emph{labeling} refers to the process of annotating \emph{features} 
such as points (sites), lines, or regions in maps, diagrams, and technical drawings with labels~\cite{BNN22}.
This facilitates that users understand what they see.
As with geophylogenies, \emph{internal labeling}
places the labels inside or in the direct vicinity of a feature;
\emph{external labeling} places the labels in the margin next to the map
and a label is then connected to the corresponding feature with a \emph{leader}.
An \emph{\sleader} is drawn using a single (straight) line segment as in~\cref{fig:examples:external,fig:geophylogeny:sleader}.
Alternatively, a \emph{\poleader} (for: parallel, orthogonal)
consists of a horizontal segment at the site
and a vertical segment at the leaf, assuming the labels are above the drawing; see~\cref{fig:geophylogeny:poleader}.
In the literature, we have only encountered \sleaders in geophylogeny drawings, 
but argue below that \poleaders should be considered as well.
In a user study on external labeling, Barth, Gemsa, Niedermann, and Nöllenburg~\cite{BGNN19} showed
that users performed best for \poleaders and well for \sleaders when asked to associate sites with their labels and vice versa;
on the other hand, \poleaders and ``diagonal, orthogonal'' \doleaders are the aesthetic~preferences.
We thus consider drawings of geophylogenies that use external labeling with \sandpoleaders.

For internal labeling, a common optimization approach is to place the most labels possible 
such that none overlap; see Neyer~\cite{Ney01} for a survey on this topic. 
Existing algorithms can be applied to label the sites in a geophylogeny drawing 
and it is geometrically straight-forward to place the labels for the leaves of~$T$.
However, a map reader must also be aided in associating the sites on the map 
with the leaves at the border based on these labels (and potentially colors).
Consider the drawing in \cref{fig:baseExample:geophylo}, 
which uses color-based internal labeling: the three kiwi species 
\emph{A. australis}, \emph{A. rowi}, and \emph{A. mantelli} occur in this order from South to North.
When using internal labeling, we would thus prefer, if possible, 
to have the three species in this order in the tree as well --
as opposed to their order in \cref{fig:baseExample:tree}.

External labeling styles conventionally forbid crossings of leaders 
as such crossings could be visually confusing (cf.~\cref{fig:examples:external}).
Often the total length of leaders is minimized given this constraint.
See the book by Bekos, Niedermann, and Nöllenburg~\cite{BNN22} on external labeling techniques.
External labeling for geophylogenies is closely related to many-to-one external labeling,
where a label can be connected to multiple features.
In that case one typically seeks a placement that minimizes the number of crossings between leaders,
which is an NP-hard problem~\cite{LKY08}.
The problem remains NP-hard even when leaders can share segments, so-called hyper-leaders~\cite{BCFHKNRS15}.
Even though our drawings of geophylogenies have only a one-to-one correspondence,
the planarity constraint on the drawing of the tree restricts which leaf orders are possible
and it is not always possible to have crossing-free leaders in a geophylogeny.
In order to obtain a drawing with low visual complexity, 
our task is thus to find a leaf order that minimizes the number of leader crossings.

Note that each vertex of a phylogenetic tree induces a group of labels (leaves) that need to appear consecutive along the boundary,
resulting in non-trivial constraints on the order of the labels.
Niedermann, Nöllenburg, and Rutter~\cite{NNR17} introduced grouping constraints as additional drawing conventions for external labeling,
though did not explore them in detail.
Depian, Nöllenburg, Terziadis, and Wallinger~\cite{DNTW24} studied grouping constraints, which may overlap, as well as ordering constraints 
on labels positioned on one or two sides of the drawings for \poleaders.
They focused on labelings with no crossing leaders and showed that finding label positions is generally NP-hard,
but provided polynomial-time algorithms for practically relevant cases.
On the other hand, Gedicke, Arzoumanidis, and Haunert~\cite{GAH23} incorporated disjoint groupings as optimization criteria to boundary labeling on four sides with \sleaders.
In our case, the phylogenetic tree introduces a set of grouping constraints that can only nest but otherwise not overlap,
placing our problem within the recent focus on labeling under additional constraints.

\paragraph{Further Related Work.}
Since there exists a huge variety of different phylogenetic trees and networks,
it is no surprise that a panoply of software to draw phylogenies has been developed~\cite{HB05,Ram12,Wiki}.
Here we want to mention \texttt{DensiTree} by Bouckaert~\cite{Bou10}.
It draws multiple phylogenetic trees on top of each other for easy comparison in so-called cloudograms and,
relevantly to us, has a feature to extend its drawing with a map for geophylogenies. 
Furthermore, the theoretical study of drawings of phylogenies 
is an active research area~\cite{DS04,DH04,BBS05,Hus09,BGJO19,CDMP20,TK20,KS20,KKNW23}.
In many of these graph drawing problems, 
the goal is to find a leaf order such that the drawing becomes optimal in a certain sense.
This is also the case for \emph{tanglegrams}, where two phylogenetic trees (or dendrograms) on the same taxa
are drawn opposite each other (say, one upward and one downward planar).
Pairs of leaves with the same taxon are then connected with straight-line segments
and the goal is to minimize the number of crossings~\cite{BBBNOSW12}.
This problem is NP-hard if the leaf orders of both trees are variable,
but can be solved efficiently when one side is fixed~\cite{FKP10}.
The latter problem is called the \textsc{One-Sided Tanglegram}~problem
and we make use of the efficient algorithm by Fernau et~al.\ \cite{FKP10} later on.

\paragraph{Results and Contribution.}
We formalize several graph visualization problems in the context of drawing geophylogenies.
We propose quality measures for drawings with internal labeling
and show that optimal solutions can be computed in quadratic time (\cref{sec:internal}).
For external labeling (\cref{sec:external}), we prove that although crossing minimization 
of \sandpoleaders is NP-hard in general, 
it is possible to check in polynomial time if a crossing-free drawing exists.
Moreover, we give a fixed-parameter tractable (FPT) algorithm,
where the parameter captures the number of pairs of sites in inconvenient positions, 
and show that there exist instances with practical relevance
that can be solved efficiently by the FPT algorithm.
Furthermore, we introduce an integer linear program (ILP) and several heuristics for crossing minimization.
We evaluate these solutions on synthetic and real-world examples,
and find that the ILP can solve realistic instances optimally in a matter of seconds
and that the heuristics, which run in a fraction of a second, are often (near-)optimal as well (\cref{sec:experiments}).
We close the paper with a discussion and open~problems.

\section{Definitions and Notation} 
\label{sec:prelim}
For a phylogenetic tree~$T$,
let~$V(T)$ be its vertex set,~$E(T)$ its edge set,
$L(T)$ its leaves, and $I(T)$ its internal vertices.
We let $n$ denote the number of leaves of $T$, i.e., $n = \abs{L(T)}$.
For an internal vertex~$v$ of~$T$,
let~$T(v)$ be the subtree rooted at~$v$ and~$n(v) = \abs{L(T(v))}$.
The \emph{clade} of~$v$ is~$L(T(V))$, i.e.\ the set of leaves in the subtree rooted at~$v$.
A \emph{cherry} of~$T$ is a subtree of~$T$ on three vertices such that exactly two are leaves of $T$ and the third is their shared parent. 

A \emph{map}~$R$ is an axis-aligned rectangle and a \emph{site} is a point on~$R$. 
A \emph{geophylogeny}~$G$ consists of a phylogenetic tree~$T(G)$, a map~$R(G)$, a set of points~$P(G)$ in~$R(G)$
as well as a 1-to-1 mapping between~$L(T(G))$ and~$P(G)$.
Call the elements of~$L(T(G)) = \set{\ell_1, \ldots, \ell_n}$ and~$P(G) = \set{p_1, \ldots, p_n}$,
so that without loss of generality the mapping is given by the indices,
that is,~$\ell_i \leftrightarrow p_i$, for~$i \in \set{1, \ldots, n}$. 
For further ease of notation, we only write~$T$, $R$, and~$P$ 
instead of~$T(G)$, $R(G)$, and~$P(G)$, respectively, as~$G$ is clear from the context.

We define a \emph{drawing}~$\Gamma$ of~$G$ 
as consisting of drawings of~$R$ with~$P$ and~$T$ in the plane with the following properties; see \cref{fig:geophylogeny}.
We assume that~$T$ is always drawn at a fixed position above~$R$
such that the leaves of~$T$ lie at evenly spaced \emph{positions} on the upper boundary of~$R$;
the position of the leftmost and rightmost leaf may be fixed arbitrarily.
Furthermore, we require that~$T$ is drawn \emph{downward planar},
that is, all edges of~$T$ point downwards from the root towards the leaves,
and no two edges of~$T$ cross. 
(In our examples we draw~$T$ as a ``rectangular cladogram'', but the exact drawing style is irrelevant given downward planarity.) 
The points of~$P$ are marked using crosses in~$R$ 
and the drawing uses either internal labeling as in \cref{fig:geophylogeny:internal} 
or external labeling with \sorpoleaders as in \cref{fig:geophylogeny:sleader,fig:geophylogeny:poleader}.
For drawings with external labeling, we let~$s_i$ denote the leader that connects~$\ell_i$ and~$p_i$.
We consider $p_i$ part of $s_i$. Two leaders $s_i$ and $s_j$ \emph{cross} if their intersection is non-empty, which can also be $p_i$ or $p_j$.
(We ignore the leaf labels as they do not effect the combinatorics:
they can simply be added in a post-processing step where~$T$ can be moved upwards to create the necessary vertical space.
In practice, we further have to pick appropriate sizes for $R$, the spacing, and fonts such that labels do not overlap horizontally.)

Since the tree is drawn without crossings and the sites have fixed locations, 
the only combinatorial freedom in the drawing~$\Gamma$ is the embedding of~$T$, 
i.e.\ which child is to the left and which is to the right.
Furthermore, since we fixed the relative positions of the map and the leaves, 
note that there is also no ``non-combinatorial'' freedom. 
Hence, an embedding of~$T$ corresponds one-to-one with a left-to-right order of~$L(T)$ 
and we call this the \emph{leaf order}~$\pi$ of~$\Gamma$.
For example, if a leaf~$\ell_i$ is at position 4 in~$\Gamma$, then~$\pi(\ell_i) = 4$.
Further, let~$\x(v)$ and~$\y(v)$ denote the x- and y-coordinate, respectively, 
of a site or leaf~$v$ of~$T$ in~$\Gamma$.
In a slight abuse of terminology, we also call it a drawing of a geophylogeny
even when the leaf order has not been fixed yet. 

\begin{figure}[t]
  \centering
    \begin{subfigure}[t]{0.28 \linewidth}
		\centering
		\includegraphics[page=1]{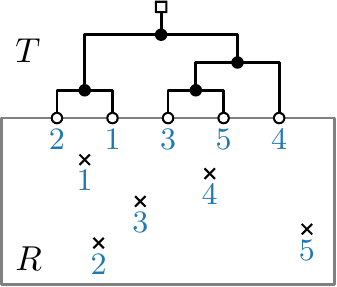}
		\caption{Drawing of $G$ with internal labeling, \ldots}
		\label{fig:geophylogeny:internal}
	\end{subfigure}
	$\quad$
	\begin{subfigure}[t]{0.28 \linewidth}
		\centering
		\includegraphics[page=2]{geophylogenyStyles}
		\caption{\ldots with s-leaders and 1 crossing, \ldots}
		\label{fig:geophylogeny:sleader}
	\end{subfigure}
	$\quad$
	\begin{subfigure}[t]{0.28 \linewidth}
		\centering
		\includegraphics[page=3]{geophylogenyStyles}
		\caption{\ldots and with po-leaders and 0 crossings.}
		\label{fig:geophylogeny:poleader}
	\end{subfigure}
  \caption{In a drawing of a geophylogeny $G$, we place $T$ above $R$
  and use either internal or external labeling to show the mapping between $P$ and~$L(T)$. 
  Figures~(b) and~(c) minimize the number of crossings for their leader type.
  Note the difference in embedding of $T$ and that not all permutations of leaves are possible.}
  \label{fig:geophylogeny}
\end{figure}

\section{Geophylogenies with Internal Labeling} 
\label{sec:internal}
In this section, we consider drawings of geophylogenies with internal labeling.
While these drawings trivially have zero crossings -- there are no leaders --
a good order of the leaves is still crucial, since it can help the reader associate between the leaves $L(T)$ and the sites $P$.
It is in general not obvious how to determine which leaf order is best for this purpose;
we propose three quality measures and a general class of measures that subsume them.
Any measure in this class can be efficiently optimized by the algorithm described below.
In practice one can easily try several quality measures and pick whichever suits the particular drawing;
a user study of practical readability could also be~fruitful. 

\pdfbookmark[2]{Quality Measures}{Quality Measures} 
\paragraph{Quality Measures.}
When visually searching for the site $p_i$ corresponding to a leaf $\ell_i$ (or the opposite direction),
it seems beneficial if $\ell_i$ and $p_i$ are close together. 
Our first quality measure, \emph{\sumDist}, 
sums the Euclidean distances of all pairs $(p_i, \ell_i)$; 
see \cref{fig:internal:euclidean}.

Since the tree organizes the leaves from left to right along the top of the map,
and especially if the distance of pairs $\ell_i$ and $p_i$ is dominated by the vertical distance as in \cref{fig:examples:external},
it might be better to consider only the horizontal distances, 
i.e.\ $\sum_{i = 1}^n \abs{\x(p_i) - \x(\ell_i)}$, which we call \emph{\xOffset}; see \cref{fig:internal:x}.
Note that the vertical distance of each leader remains fixed for any leaf order.
Therefore, an optimal solution for \xOffset is equivalent to using the sum of Manhattan distances of all pairs.

Finally, instead of the geometric offset,
\emph{\xHop} considers how much the leaf order permutes the geographic left to right order of the sites.
Assuming without loss of generality that the sites are in general position and indexed from left to right,
we sum how many places each leaf $\ell_i$ is away from leaf position $i$,
i.e.\ $\sum_{i = 1}^n \abs{\pi(\ell_i) - i}$; see \cref{fig:internal:hop}.

These measures have in common that they sum over some ``quality'' of the leaves,
where the quality of a leaf depends only on its own position and that of the sites (but not the other leaves).
Here we call such quality measures \emph{leaf additive}; Benkert, Haverkort, Kroll, and Nöllenburg~\cite{BHKN09} call them \textit{badness functions}
and suggest that a leaf additive quality measure could also take the interference of leaders with the underlying map into account.
Unfortunately not all sensible quality measures are leaf additive (such as for example the number of inversions in $\pi$).

\begin{figure}[tbh]
  \centering
    \begin{subfigure}[t]{0.30 \linewidth}
		\centering
		\includegraphics[page=1]{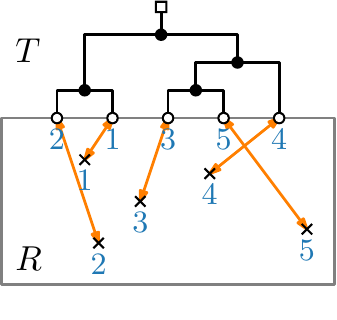}
		\caption{\sumDist~-- sum of Euclidean distances.}
		\label{fig:internal:euclidean}
	\end{subfigure}
	\quad
	\begin{subfigure}[t]{0.30 \linewidth}
		\centering
		\includegraphics[page=2]{geophylogenyInternal}
		\caption{\xOffset~-- sum of horizontal distances.}
		\label{fig:internal:x}
	\end{subfigure}
	\quad
	\begin{subfigure}[t]{0.30 \linewidth}
		\centering
		\includegraphics[page=3]{geophylogenyInternal}
		\caption{\xHop~-- sum of position shifts, here 4.}
		\label{fig:internal:hop}
	\end{subfigure}
  \caption{Orange arrows indicate what the three quality measures for internal labeling consider.}
  \label{fig:internal}
\end{figure}

\pdfbookmark[2]{Algorithm}{Algorithm for Leaf-Additive Quality Measures}
\paragraph{Algorithm for Leaf-Additive Quality Measures.}
Let $f \colon L(T) \times \set{1, \ldots, n} \to \mathbb{R}$ be a quality measure for placing one particular leaf at a particular position; 
the location of the sites is constant for a given instance, so we do not consider it an argument of $f$.
This uniquely defines a leaf additive quality measure on drawings by summing over the leaves;
assume without loss of generality that we want to minimize this sum.

Now we naturally lift $f$ to inner vertices of $T$ by taking the sum over leaves in the subtree rooted at that vertex -- in the best embedding of that subtree.
More concretely, note that any drawing places the leaves of any subtree at consecutive positions and they take up a fixed width regardless of the embedding.
For an inner vertex~$v$, assume that the leftmost leaf of the subtree~$T(v)$ is placed at position~$i$.
Note that because $T(v)$ requires $n(v)$ positions, we have that $i \in \set{1, \ldots, n - n(v) + 1}$.
Let $F(v, i)$ be the minimum sum of the quality $f$ of the leaves of $T(v)$, taken over all embeddings of~$T(v)$.
For $i > n - n(v) + 1$, we set $F(v, i) = \infty$.
Then, by definition, the optimal value for the entire instance is~$F(\rho,1)$, where $\rho$ is the root of~$T$.

\begin{theorem} \label{clm:internal}
Let $G$ be a geophylogeny with $n$ taxa and let $f$ be a leaf additive quality measure.
A drawing $\Gamma$ with internal labeling of $G$ that minimizes (or maximizes) $f$
can be computed in $\Oh(n^2)$~time and $\Oh(n^2)$~space.
\end{theorem}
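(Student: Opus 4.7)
The plan is a bottom-up dynamic program on $T$ using the function $F(v,i)$ already defined in the statement. For a leaf $\ell$ the base case is simply $F(\ell, i) = f(\ell, i)$, since a single leaf admits only one embedding. For an internal vertex $v$ with children $u_1, u_2$, the key structural observation is that in any downward-planar embedding the leaves of $T(v)$ occupy exactly the positions $i, i+1, \ldots, i + n(v) - 1$, and these positions are partitioned into a left block of size $n(u_1)$ followed by a right block of size $n(u_2)$ or vice versa, according to which child is the left child. Moreover, the embeddings of $T(u_1)$ and $T(u_2)$ are independent once this choice is made. Since the measure is leaf additive, the total cost decomposes as a sum over the two subtrees, giving the recurrence
\begin{equation*}
F(v, i) = \min\bigl\{\, F(u_1, i) + F(u_2, i + n(u_1)),\ F(u_2, i) + F(u_1, i + n(u_2)) \,\bigr\}.
\end{equation*}
The optimum value is $F(\rho, 1)$ where $\rho$ is the root, and an optimal embedding is recovered by backtracking through the $\arg\min$ choices; this yields the desired drawing $\Gamma$. (For maximization, replace $\min$ with $\max$.)

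For correctness, I would argue by induction on the height of $v$ that $F(v,i)$ equals the minimum of $\sum_{\ell \in L(T(v))} f(\ell, \pi(\ell))$ over all embeddings of $T(v)$ whose leftmost leaf is at position $i$. The base case is immediate. For the inductive step, any embedding of $T(v)$ is obtained by choosing a left/right order of the children and independent embeddings of the two child subtrees at the correct starting positions, so the minimum splits exactly as the recurrence prescribes.

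For the complexity, I would count entries and work. Since $T$ is a rooted binary phylogenetic tree with $n$ leaves, $|V(T)| = 2n-1$, and for each vertex there are at most $n$ candidate starting positions, so the table has $O(n^2)$ entries and requires $O(n^2)$ space. Each entry is computed in $O(1)$ time from the two child entries (assuming $f(\ell,i)$ and the subtree sizes $n(v)$ are precomputed, which can be done in $O(n)$ and $O(n^2)$ time respectively). Processing the vertices in post-order, the total running time is $O(n^2)$. Backtracking to reconstruct the leaf order touches each internal vertex once and hence adds only $O(n)$ time.

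No step is really a significant obstacle; the only subtlety worth flagging is the handling of boundary indices, namely that $F(v,i) = \infty$ for $i > n - n(v) + 1$ and that in the recurrence the second term must have a valid starting position for the right subtree, which is guaranteed whenever $i \le n - n(v) + 1$. A brief remark should also be made that the claim covers both minimization and maximization, as the three example measures (\sumDist, \xOffset, \xHop) are to be minimized, but the recurrence adapts trivially.
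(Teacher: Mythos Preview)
Your proposal is correct and follows essentially the same dynamic program as the paper, with the same recurrence, post-order evaluation, and backtracking; if anything you supply more detail (explicit base case, induction for correctness, boundary handling). One tiny slip: the precomputation costs are swapped---all values $f(\ell,i)$ take $O(n^2)$ and the subtree sizes $n(v)$ take $O(n)$---but this does not affect the overall bound.
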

\begin{proof}
For an inner vertex $v$ with children $x$ and $y$, we observe the following equality, 
since the embedding has only two ways of ordering the children and those subtrees are then independent; 
see also \cref{fig:internal:algo}:
\begin{equation}\label{eq:dp}
	F(v, i) = \min\set{ \quad F(x, i) + F(y, i + n(x)),\quad F(y, i) + F(x, i + n(y)) \quad}
\end{equation}
Using dynamic programming on $F$, for example in postorder over $T$, 
allows us to calculate~$F(\rho,1)$ in $\Oh(n^2)$ time and space, 
since there are $2n$ vertices, $n$ possible leaf positions, 
and \cref{eq:dp} can be evaluated in constant time by precomputing all $n(v)$.
The space requirement is thus also in $\Oh(n^2)$.
As it is typical, the optimal embedding of $T$ can be traced back
through the dynamic programming table in the same running time.
\end{proof}

\begin{figure}[tb]
  \centering
      \begin{subfigure}[t]{0.48 \linewidth}
		\centering
		\includegraphics[page=1]{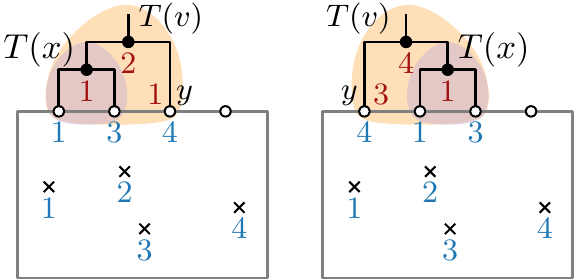}
		\caption{$F(v, 1) = \min\set{2, 4} = 2$}
		\label{fig:internal:algo:left}
	\end{subfigure}
	\hfill
	\begin{subfigure}[t]{0.48 \linewidth}
		\centering
		\includegraphics[page=2]{geophylogenyAlgo}
		\caption{$F(v, 2) = \min\set{1, 5} = 1$}
		\label{fig:internal:algo:right}
	\end{subfigure}
  \caption{Computing \xHop for the subtree $T(v)$ at positions $1$ and $2$.}
  \label{fig:internal:algo}
\end{figure}

\pdfbookmark[2]{Adaptability}{Adaptability} 
\paragraph{Adaptability.}
Note that we can still define leaf additive quality measures
when $P$ contains regions (rather than just points) as in \cref{fig:baseExample}.
For example, instead of considering the distance between~$\ell_i$ and $p_i$,
we could consider the smallest distance between $\ell_i$ and any point in the region~$p_i$.
Similarly, if each element of $P$ is a set of sites, 
we could use the average or median distance to the sites corresponding to $\ell_i$.
For such a leaf additive quality measure $f$,
our algorithm finds an optimal leaf order in $\Oh(n^2 x)$ time 
where~$x$ is a bound on the time needed to compute $f(\ell_i, j)$ over all $i, j \in \set{1, \ldots, n}$.

\pdfbookmark[2]{Interactivity}{Interactivity}
\paragraph{Interactivity.}
With the above algorithm, we can restrict leaves and subtrees to be in a certain position or a range of positions, 
simply by marking all other positions as prohibitively expensive in~$F$;
the rotation of an inner vertex can also be fixed by considering only the corresponding term of~\cref{eq:dp}.
This can be used if there is a conventional order for some taxa
or to ensure that an outgroup-taxon (i.e.\ taxon only included to root and calibrate the phylogenetic tree)
is placed at the leftmost or rightmost position.
Furthermore, this enables an interactive editing experience 
where a designer can inspect the initial optimized drawing 
and receive re-optimized versions based on their feedback
-- for example ``put the leaves for the sea lions only where there is water on the edge of the map''.
(This is leaf additive.)

\section{Geophylogenies with External Labeling} 
\label{sec:external}
In this section, we consider drawings of geophylogenies that use external labeling.
Recall that for a given geophylogeny $G$, 
we want to find a leaf order $\pi$ such that the number of leader crossings in a drawing~$\Gamma$ of~$G$ is minimized.   
We show the following.
\begin{enumerate}
	\item The problem is NP-hard in general (\cref{sec:np}). 
	\item A crossing-free solution can be found in polynomial time if it exists (\cref{sec:crossingfree}).
	\item Some instances have a geometric structure that allows us to compute optimal solutions in polynomial time (\cref{sec:easy}).
	\item The problem is fixed parameter tractable (FPT) in a parameter based on this structure (\cref{sec:fpt}).
	\item We give an integer linear program (ILP) to solve the problem (\cref{sec:ilp}).
	\item We give several heuristic algorithms for the problem  (\cref{sec:heuristics}).
\end{enumerate}
All results hold analogously for \sandpoleaders;
only the parameter value of the FPT algorithm is different depending on the leader type.

\subsection{NP-Hardness} 
\label{sec:np}
In order to prove that the decision variant of our crossing minimization problem is NP-complete,
we use a reduction from the classic \textsc{Max-Cut} problem,
which is known to be NP-complete~\cite{GJ79}.
In an instance of \textsc{Max-Cut},
we are given a graph $H$ and a positive integer $c$,
and have to decide if there exists a bipartition $(A, B)$ of $V$
such that at least $c$ edges have one endpoint in $A$ and one endpoint in $B$; 
see \cref{fig:np:maxCut}.
The proof of the following theorem is inspired by the construction Bekos~et~al.\ \cite{BCFHKNRS15} 
use to show NP-completeness of crossing-minimal labeling with hyperleaders 
(with a reduction from \textsc{Fixed Linear Crossing Number}).

\begin{figure}[htb]
  \centering
  \includegraphics[page=1]{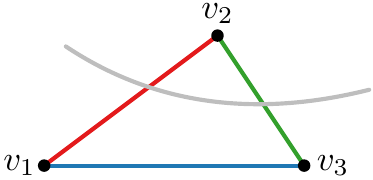}
  \caption{The partition of the triangle $v_1v_2v_3$ cuts the two edges $\set{v_1,v_2}$, $\set{v_2,v_3}$.
  We reduce the corresponding \textsc{Max-Cut} instance to a geophylogeny drawing in~\cref{fig:np:overview}.}
  \label{fig:np:maxCut}
\end{figure}

\begin{theorem} \label{clm:po:np}
Let $G$ be a geophylogeny and $k$ a positive integer.
For both \sandpoleaders,
it is NP-complete to decide whether a drawing $\Gamma$ of $G$ with external labeling and a leaf order $\pi$
that induces at most $k$ leader crossings exists.
\end{theorem}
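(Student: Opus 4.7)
Membership in NP is routine: given a candidate leaf order $\pi$, the induced drawing $\Gamma$ is uniquely determined, and the $\binom{n}{2}$ pairs of leaders can be tested for crossing in constant time each, so we can verify in polynomial time whether at most $k$ crossings occur.

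For hardness, the plan is to reduce from \textsc{Max-Cut}: given $(H,c)$ with $H=(V,E)$ and $|V|=n$, construct a geophylogeny $G$ and an integer $k$ such that $H$ admits a bipartition cutting at least $c$ edges iff $G$ admits a drawing with external labeling using at most $k$ crossings. The idea is to encode each vertex $v_i\in V$ as a \emph{vertex gadget}, namely a cherry $c_i$ with two leaves $\ell_i^L,\ell_i^R$ whose two sites $p_i^L,p_i^R$ lie close together but at carefully chosen positions on the map. The two rotations of $c_i$ correspond to the two sides of the bipartition containing $v_i$. These $n$ cherries are then combined into a single tree $T$ using a ``skeleton'' whose internal structure is designed to force essentially all rotation freedom that matters into the cherries themselves. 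Concretely, one can stack the cherries along a caterpillar spine, while placing all sites of each cherry close to a common vertical line so that the rotations of the spine nodes contribute only a fixed number of crossings regardless of the outcome.

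To encode edges, I would place the pair of sites for each vertex gadget $c_i$ along a shared diagonal or a common row, choosing the offsets so that the two leaders of $c_i$ cross the two leaders of $c_j$ in a pattern that depends on the relative rotation of the two cherries: if $c_i$ and $c_j$ are rotated ``the same way'' (both vertices on the same side of the cut) a single ``extra'' crossing appears between their leaders, while if they are rotated oppositely (the edge is cut) no extra crossing appears, or vice versa. Summing the contributions of all pairs $\{v_i,v_j\}\in E$, together with the fixed crossing contribution $K_0$ coming from the skeleton and from unrelated pairs of vertex gadgets, gives a total crossing count of the form $K_0 + |E| - \mathrm{cut}(A,B)$, so choosing $k := K_0 + |E| - c$ completes the equivalence. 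The reduction is clearly polynomial.

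I expect the main obstacle to be the geometric realization of the edge gadgets: we need to place $2n$ sites in the rectangle~$R$ so that the crossing pattern between the leaders of $c_i$ and $c_j$ is controlled \emph{only} by the rotation choices at $c_i$ and $c_j$, and is independent for different pairs $\{i,j\}$. A natural approach is to place the sites on a convex curve in $R$ (e.g., a shallow arc just below the top boundary), so that the combinatorics of crossings is captured by the inversions of a permutation of the sites, and then verify by case analysis that each edge gadget contributes exactly one crossing in the ``uncut'' rotation and zero in the ``cut'' rotation. The construction and counting must be checked separately for \sleaders and for \poleaders, since the crossing patterns differ slightly; however, both leader types are monotone enough that the same site placement (possibly with a uniform horizontal shift) works for both, which is why the statement holds uniformly. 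Finally, I would verify that no non-gadget rotation of $T$ can yield fewer crossings than $K_0$, which pins down the exact correspondence between optimal leaf orders and maximum cuts of $H$.
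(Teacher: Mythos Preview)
Your high-level plan (reduce from \textsc{Max-Cut}, encode the side of each vertex by the rotation of a small subtree, and count crossings) matches the paper's approach, but the sketch has a genuine gap at the place you yourself flag as ``the main obstacle.''

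With one cherry and two sites per vertex, your construction has exactly $2n$ sites whose positions depend only on the vertices of $H$, not on its edge set. Whatever crossing pattern you arrange between the leaders of $c_i$ and $c_j$ is then a function of $i$, $j$, and the two rotation bits---not of whether $\{v_i,v_j\}\in E$. Consequently the total crossing count you obtain has the form $K_0+\alpha\cdot\binom{n}{2}-\beta\cdot|\{\text{pairs rotated oppositely}\}|$ (or similar), independent of $E$; you would be solving \textsc{Max-Cut} on $K_n$, which is trivial. Your sentence ``Summing the contributions of all pairs $\{v_i,v_j\}\in E$'' quietly assumes that non-edge pairs contribute a rotation-independent constant, but nothing in the construction enforces this, and with only two points per vertex it cannot be enforced for arbitrary $H$: you would need a geometric placement of $2n$ points that realizes the adjacency relation of an arbitrary graph through pairwise leader interactions.

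The paper resolves this by giving each vertex gadget one leaf \emph{per incident edge} (in fact two copies, in subtrees $T(i)$ and $T(i')$), and introducing a separate four-site \emph{edge gadget} for every $e\in E$; the edge gadget's sites sit on a central vertical line so that the gadget contributes $1$ crossing if the edge is cut and $2$ if not, while pairs of edge gadgets contribute a fixed constant. A second ingredient you are missing is the analogue of the paper's \emph{fixing gadgets}: your claim that the spine rotations ``contribute only a fixed number of crossings regardless of the outcome'' is exactly what needs to be engineered, and the paper spends a nontrivial amount of work (many four-leaf ``fixing units'') to make spine rotations prohibitively expensive. Without such a mechanism, the skeleton rotations interact with the vertex gadgets and the correspondence between leaf orders and cuts breaks down.
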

\begin{proof}
The problem is in NP since, given $G$, $k$, and $\pi$, 
we can check in polynomial time whether this yields at most $k$ crossings.
To prove NP-hardness, we use a reduction from \textsc{Max-Cut} as follows.
The proof works the same for \sandpoleaders; we use \poleaders in the figures.

For an instance $(H, c)$ of \textsc{Max-Cut},
we construct an instance of our leader crossing minimization problem 
by devising a geophylogeny $G$ with phylogenetic tree $T$, points $P$ on a map $R$ 
and a constant $k$; see \cref{fig:np:overview}.
Without loss of generality, we assume that each vertex in $H$ has at least degree 2.
Let $V(H) = \set{v_1, \ldots, v_n}$ and $m = \abs{E(H)}$.
We consider each edge $\set{v_i, v_j}$ with $i < j$ to be directed as $v_iv_j$.
Let $E(H)$ be ordered $e_1, \ldots, e_m$ lexicographically on the indices $i$ and $j$. 
Throughout the following, let $(A, B)$ be some partition of $V(H)$ and 
let $R$ have height $4m + 4 + d$ where we set $d$ appropriately below.

We first describe the broad structure of the reduction 
and then give details on the specific gadgets. 
Each vertex is represented by a \emph{vertex gadget} in $T$.
For each edge $v_iv_j$ in $E(H)$, there is an \emph{edge gadget} that connects sites
on the map to the vertex gadgets with four leaders.
Using \emph{fixing gadgets} to restrict the possible positions 
for vertex gadget's leaves, we enforce that an edge gadget induces 2 crossing 
if $v_i$ and $v_j$ are both in $A$ or both in $B$;
otherwise it will induce 1 crossing.
The number of crossings between leaders of different edge gadgets is in total some constant $k_{\text{fix}}$.
We set $k = k_{\text{fix}} + 2m - c$.
Consequently, if $G$ admits a drawing with at most~$k$ leader crossings, 
then $H$ admits a cut with at least $c$ edges, and vice versa.
\begin{figure}[t]
  \centering
  \includegraphics[page=2]{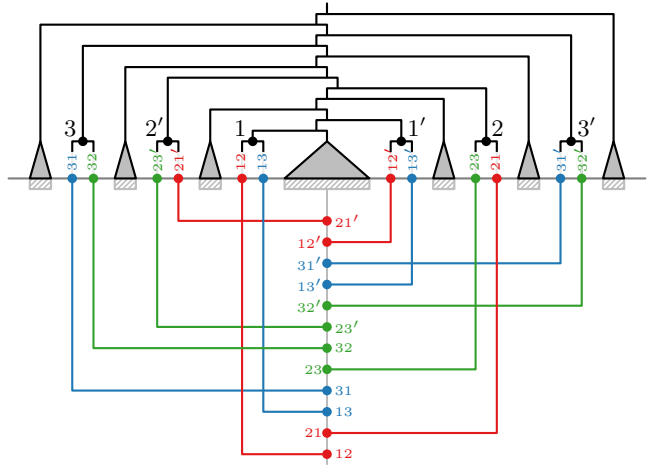}
  \caption{Sketch of the reduction of the graph from \cref{fig:np:maxCut} to a geophylogeny drawing with \poleaders.
  We simplified $v_i$ to $i$; each edge gadget is drawn in the respective color;  
  fixing gadgets are represented by triangles in the tree and hatched rectangles on the map.}
  \label{fig:np:overview} 
\end{figure}

\paragraph{Vertex Gadgets.}
Each vertex~$v_i \in V(H)$ is represented by two subtrees 
rooted at vertices~$i$ and~$i'$ in~$T$
such that from~$i$ going two edges up and one down we reach~$i'$. 
In~$T(i)$ there is a leaf labeled~$ij$ for each edge~$v_iv_j$ or~$v_jv_i$ incident to~$v_i$ in~$H$.
Furthermore,~$T(i)$ has a planar embedding where the leaves can be in increasing (or decreasing)
order based on the order of the corresponding edges in~$E(H)$; see again \cref{fig:np:overview}.
$T(i')$ is built analogously, though we label the leaves with~$ij'$.
In~$T$, the vertex gadgets and fixing gadgets alternate;
more precisely, the subtree of a central fixing gadget
lies inside the subtree of the vertex gadget for~$v_1$,
which in turn lies in the subtree of a fixing gadget, and so on.  
The fixing gadgets ensure that either~$T(i)$ is in the left half of the drawing
and~$T(i')$ in the right half, or vice versa (explained below).
Furthermore, we interpret~$T(i)$ being in the left (right) half as~$v_i$ being in~$A$ (resp.~$B$).

\paragraph{Edge Gadgets.}
For an edge~$e_h = v_iv_j \in E(H)$, we have four sites~$ij$,~$ji$,~$ij'$,~$ji'$ on the central axis of the drawing,
which correspond to the leaves in~$T(i)$,~$T(j)$,~$T(i')$,~$T(j')$ with the same label.
From bottom to top, we place the sites~$ij$ and~$ji$ at heights~$2h-1$ and~$2h$, respectively;
we place the sites~$ij'$ and~$ji'$ at~$4m-2(h-1)-1$ and~$4m-2(h-1)$, respectively; see \cref{fig:np:edgeGadget}.
Hence, in the bottom half the sites are placed in the order of the edges,
while in the top half they are (as pairs) in reverse order.
Note that while the order of the sites~$ij$,~$ji$,~$ij'$,~$ji'$ is fixed,
the order of the leaves~$ij$,~$ji$,~$ij'$,~$ji'$ is not.
Yet there are only four possible orders
corresponding to whether~$v_i$ and~$v_j$ are in~$A$ or~$B$.
Further note that whether the leaders of the edge gadget cross 
is therefore not based on the geometry or the type of the leaders but solely on the leaf order. 
In particular, if~$v_iv_j$ is cut by~$(A,B)$ (as in \cref{fig:np:edgeGadget:cut}), 
then we have the leaf order~$ji'$,~$ij$,~$ij'$,~$ji$
with~$ji'$ and~$ij$ left of the center (up to reversal of the order).
Therefore the leaders~$s_{ij}$ and~$s_{ji'}$ cross while~$s_{ij'}$ and~$s_{ji}$ do not.
Hence, there is exactly one crossing.
On the other hand, if~$v_iv_j$ is not cut by~$(A,B)$ (as in \cref{fig:np:edgeGadget:notcut}),
then we have the leaf order~$ij$,~$ji$,~$ij'$,~$ji'$
with~$ij$ and~$ji$ left of the center (up to reversal of the order).
Hence we have two crossings as both~$s_{ij}$ and~$s_{ji}$ as well as~$s_{ij'}$ and~$s_{ji'}$ cross.

\begin{figure}[h]
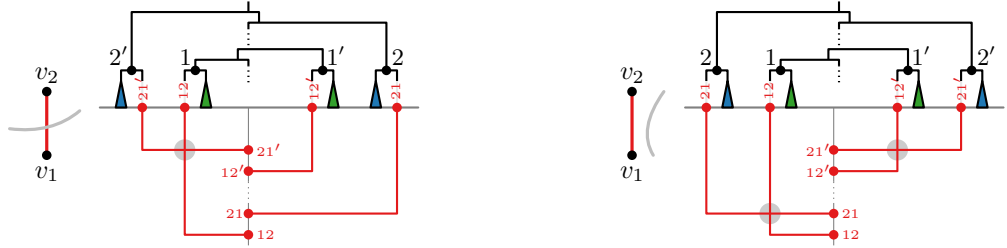

	\centering
	\begin{subfigure}[t]{.47 \linewidth}
		\centering
		\includegraphics[page=3]{NPmaxCut}
		\caption{If $v_1v_2$ is in the cut, its edge gadget
		induces exactly 1 crossing.}
		\label{fig:np:edgeGadget:cut}
	\end{subfigure}
	\hfill
	\begin{subfigure}[t]{.47 \linewidth}
 		\centering
		\includegraphics[page=4]{NPmaxCut}
		\caption{If $v_1v_2$ is not in the cut, its edge gadget
		induces exactly 2 crossing.}
		\label{fig:np:edgeGadget:notcut}
	\end{subfigure}
  \caption{The edge gadget for $v_1v_2$ connects the vertex gadgets for $v_1$ and $v_2$.}
  \label{fig:np:edgeGadget} 
\end{figure}

\paragraph{Edge Pairs.}
Let~$v_iv_j,v_kv_l \in E(H)$. 
We assume an optimal leaf order in each vertex gadget. 
Then careful examination of the overall possible leaf orders (and partitions) shows
that the leaders in the edge gadgets of $v_iv_j$ and $v_kv_l$
induce exactly three crossings if $v_iv_j$ and $v_kv_l$ share a vertex; see again \cref{fig:np:overview}.
If the two edges are disjoint, then the leaders induce exactly four crossings; see \cref{fig:np:edgePairs}.
Note that changing the partition or the order of vertices does not change the number of crossings; 
it only changes which pairs among the eight leaders cross.
We can thus set $k_{\text{fix}}$ as three times the number of adjacent edge pairs 
plus four times the number of disjoint edge pairs.

\begin{figure}[tbh]
  \centering
  \includegraphics[page=5]{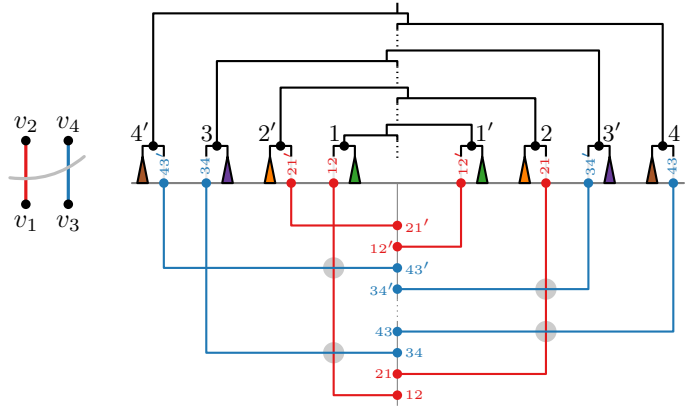}
  \caption{Leaders for two edge gadgets of disjoint edges induce four crossings.}
  \label{fig:np:edgePairs} 
\end{figure}

\paragraph{Fixing Gadgets.}
To ensure that the two subtrees of each vertex gadget
are distributed to the left and to the right, 
we add a fixing gadget in the center and
one after each position allocated to a vertex gadget subtree. 
If both subtrees of a vertex gadget would be placed on the same side of a fixing gadget,
then the fixing gadget would have to be translated and induce too many crossings.
More precisely, each fixing gadget is composed of a series of \emph{fixing units}.
A fixing unit~$F$ consists of a four-leaf tree
with cherries $\set{a, a'}$ and $\set{b, b'}$.
Assuming $F$ is to be centered at position $x$,
we place the sites for $a$ and $a'$ (for $b$ and $b'$) at $x$ 
at height $4m + d + 1$ and 4 (resp. plus 2 and 3), respectively.
Thus if $F$ is centered at $x$, it can be drawn with 0 crossings; see \cref{fig:np:fixing:inplace}.
However, if $F$ is translated by two or more then it induces 2 crossings; see \cref{fig:np:fixing:translated}.
Since each vertex of $H$ has at least degree two,
the two trees $T(i)$ and $T(i')$ of a vertex gadget have at least two leaves each.
Hence, $F$ cannot be translated by just one position.
By using $m-c$ fixing units per fixing gadget, 
it becomes too costly to move even one fixing gadget 
as the instance would immediately have to many crossings.

Finally, we set $d$ such that no leader of an edge gadget can cross a leader of a fixing gadget.
In particular, $d = 4$ is sufficient for \poleaders. 
Note that a $\pm 45^\circ$ slope for \sleaders suffices to avoid crossing a fixing gadget.
Between the leftmost (rightmost) vertex gadget leaf and the center are at most $n$ fixing gadgets, each with $m-c$ fixing units of width 4, and $m - 1$ other vertex gadget leaves.
We can bound this from above with $4m^2$ and thus set $d = 4m^2$ for \sleaders.
\end{proof}

\begin{figure}[tbh]
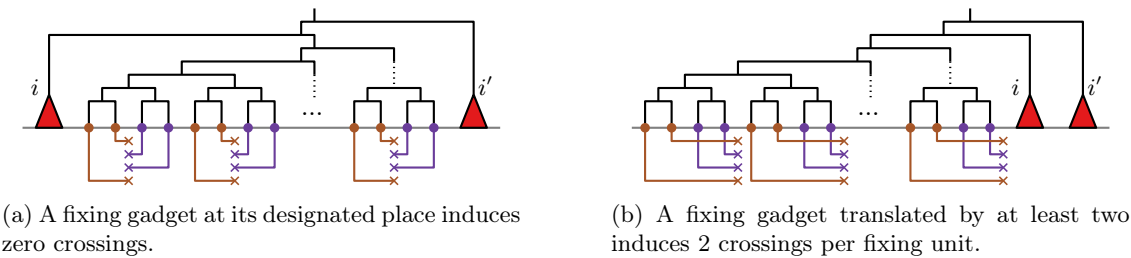

	\centering
	\begin{subfigure}[t]{.46 \linewidth}
		\centering
		\includegraphics[page=6]{NPmaxCut}
		\caption{A fixing gadget at its designated place induces zero crossings.}
		\label{fig:np:fixing:inplace}
	\end{subfigure}
	\hfill
	\begin{subfigure}[t]{.46 \linewidth}
		\centering
		\includegraphics[page=7]{NPmaxCut}
		\caption{A fixing gadget translated by at least two induces $2$ crossings per fixing unit.}
		\label{fig:np:fixing:translated}
	\end{subfigure}
  \caption{A fixing gadget consists of a series of four-leaf fixing units and is always placed at its designed place
  since it would otherwise cause too many crossings.}
  \label{fig:np:fixing}
\end{figure}

Note that the construction used in the proof of \cref{clm:po:np} does not rely on the sites of the edge gadget or the fixing units being collinear.
As long as the sites of the edge gadgets (a fixing unit) lie ``close enough'' to the center (resp. the center of the fixing unit) and maintain their relative vertical order, they can be in general position.

\subsection{Crossing-Free Instances} 
\label{sec:crossingfree}
We now show how to decide whether a geophylogeny admits a drawing without leader crossings in polynomial time
for both \sandpoleaders.

\begin{proposition} \label{clm:s:zero}
Let $G$ be a geophylogeny on $n$ taxa.
For both \sandpoleaders, we can decide if a drawing $\Gamma$ of $G$ with external labeling
and a leaf order $\pi$ that induces zero leader crossings exists in~$\Oh(n^6)$~time.
\end{proposition}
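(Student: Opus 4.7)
My plan is to proceed by dynamic programming on~$T$, combined with a precomputed pairwise table that captures whether two leaders cross as a function of their leaves' positions. The key structural insight is that any two leaders $s_a$ and $s_b$ have their leaves separated at the internal vertex $v=\mathrm{lca}(\ell_a,\ell_b)$, so the global drawing is crossing-free if and only if, for every internal vertex $v$ with children $x$ and $y$, no leader whose leaf lies in $T(x)$ crosses any leader whose leaf lies in $T(y)$. This decomposes the single global condition into $n-1$ local conditions, one per internal vertex.

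I would first precompute, for each ordered pair of leaves $(\ell_a,\ell_b)$ and each ordered pair of positions $(p,q)$ with $p\neq q$, whether the leaders $s_a$ and $s_b$ would cross when $\pi(\ell_a)=p$ and $\pi(\ell_b)=q$. For \sleaders this is a single segment-intersection test; for \poleaders it reduces to the combinatorial condition that the vertical segment of the lower-site leader lies strictly between the site and the leaf of the upper-site leader along the top boundary. In either case the test takes $\Oh(1)$ per entry, and the whole table is built in $\Oh(n^4)$ time and space.

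With the table in hand I would run a tree DP over pairs $(v,i)$, where $v$ is an internal vertex and $i\in\{1,\ldots,n-n(v)+1\}$ is a starting position of $T(v)$; there are $\Oh(n^2)$ such states. I want a state to certify that $T(v)$ admits an embedding over $[i,i+n(v)-1]$ with zero internal crossings, so that by the LCA decomposition the instance is feasible iff the root's state at position~$1$ is certified. For a leaf the certification is trivial. For an internal $v$ with children $x,y$ I try both orderings of the children and, for each, look for compatible certifications in the two child ranges together with an inter-subtree non-crossing certificate obtained by $\Oh(n^2)$ lookups in the precomputed table (one per pair $(\ell_a,\ell_b)\in L(T(x))\times L(T(y))$).

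The main obstacle is that a purely Boolean DP does not suffice here: whether two leaders from different subtrees cross depends on the exact positions chosen inside each subtree, and a subtree may admit several valid embeddings, only some of which combine at the parent. To remedy this I would augment each entry with a compact description of a family of candidate embeddings (for instance, parametrised by the identities of the ``boundary'' leaves of $T(v)$, or equivalently by the swap choices forced near $v$), so that the combination step can enumerate candidate pairs of child embeddings and test each via the crossing table. Accounting for up to $\Oh(n^2)$ candidate embedding pairs per state and $\Oh(n^2)$ work per pair yields $\Oh(n^4)$ work per state and thus the claimed $\Oh(n^6)$ total time. I expect the technical crux to be exhibiting this compact representation of valid embeddings and arguing that polynomially many candidates per state suffice to decide feasibility — once this is established, the same framework handles \sleaders and \poleaders uniformly, differing only in the underlying pairwise-crossing predicate.
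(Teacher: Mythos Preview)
Your overall DP framework (states $(v,i)$, storing several valid embeddings of $T(v)$ per state, combining child embeddings at the parent) is the same as the paper's, and your LCA decomposition of the crossing condition is correct. The gap is precisely the part you flag as the ``technical crux'': you never identify \emph{which} embeddings of $T(v)$ are equivalent for the purpose of combining at the parent, so your claimed bound of $\Oh(n)$ stored embeddings per state (and hence $\Oh(n^2)$ candidate pairs) is unjustified. Your suggestions---``boundary leaves'' or ``swap choices forced near $v$''---look inward at the structure of $T(v)$, but the right invariant looks outward: what matters for later combination is how the leaders of $T(v)$ constrain the sites of $P\setminus P(v)$ that happen to lie in the region swept by those leaders.

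Concretely, the paper's key observation is geometric rather than combinatorial. For \sleaders, let $p_k$ be the lowest site in $P(v)$ and $H(v,i)$ the convex hull of $P(v)$ together with the two extreme leaf positions. Any external site $p_o\in P\setminus P(v)$ inside $H(v,i)$ must eventually send its leader out to the left or to the right of the block occupied by $T(v)$; which side is possible is determined solely by the leader $s_k$ (since $p_k$ is lowest, $s_k$ is the only leader of $T(v)$ that can separate $p_o$ from one side). Thus two crossing-free embeddings of $T(v)$ at position $i$ are interchangeable for all future purposes whenever $\ell_k$ sits at the same position, because they induce the same left/right partition of the external sites in $H(v,i)$. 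This immediately bounds the number of stored embeddings by $n(v)$. An analogous argument with an axis-aligned rectangle works for \poleaders. Without this (or an equivalent) equivalence relation, your DP could be forced to carry exponentially many embeddings per state, and the $\Oh(n^6)$ bound does not follow.
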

\begin{proof}
To find a leaf order~$\pi$ for a drawing~$\Gamma$ that induces zero leader crossings, if it exists,
we use a dynamic program similar to the one we used for internal labeling in \cref{clm:internal}.
Let~$i \in \set{1, \ldots, n}$ and let~$v \in V(T)$.
Then we store in~$F(v, i)$ up to~$n(v)$ embeddings of~$T(v)$
for which~$T(v)$ can be placed with its leftmost leaf at position~$i$
such that the leaders to~$T(v)$ are pairwise crossing free.
Note that~$F(v, i)$ always stores exactly one embedding when~$v$ is a leaf.
For an inner vertex~$v$ with children~$x$ and~$y$,
we combine pairs of stored embeddings of~$T(x)$ and~$T(y)$
and test whether they result in a crossing free embedding of~$T(v)$.
For~$\rho$ the root of~$T$, we get a suitable leaf order for each embedding stored in~$F(\rho, 1)$.
However, since combining embeddings of~$T(x)$ and~$T(y)$ can result in $\Oh(n(v)^2)$ many embeddings of $T(v)$,
we have to be more selective.
We now describe when we have to keep multiple embeddings of~$T(v)$, how we select them,
and show that at most~$n(v)$ embeddings for~$T(v)$ at position~$i$ suffice. 
We describe the details first for \sleaders and then for \poleaders.

\paragraph{\texttt{s}-Leaders.}
Suppose that we can combine an embedding of $T(x)$ and an embedding of $T(y)$
where~$T(v)$ is placed with its leftmost leaf at position~$i$
such that the leaders of $T(v)$ pairwise do not cross.
Consider the set~$P(v)$ of sites corresponding to~$L(T(v))$.
In particular, let~$p_k$ have the lowest y-coordinate among the sites in~$P(v)$.
Let~$H(v, i)$ be the convex hull of the sites~$P(v)$ and the leaf positions~$i$ and~$i + n(v) -1$; see \cref{fig:zeroCrossings:s}.
We distinguish three cases:

\begin{enumerate}[leftmargin=*,label=Case \arabic* - ]
  \item \textbf{there is no site of~$\bm{ P \setminus P(v) }$ inside~$\bm{ H(v, i) }$:} 
    Then no leader of a site~$p_o \in P \setminus P(v)$ has to ``leave''~$H(v, i)$.
    A leader that would need to intersect~$H(v, i)$ would cause a crossing with a leader of~$T(v)$
    for any embedding of~$T(v)$.
  	Hence it suffices to store only this one embedding of~$T(v)$ and not consider any further embeddings.

  \item \textbf{there is a site~$\bm{ p_o \in P \setminus P(v) }$ trapped in~$\bm{ H(v, i) }$:}
  	More precisely, let~$H(v, i, p_o)$ be the convex hull of the positions~$i$ and~$i + n(v) -1$ 
	and all sites of~$P(v)$ above~$p_o$.
	We consider~$p_o$ \emph{trapped} if the leader of~$p_o$ cannot 
	reach any position left of~$i$ or right of~$i + n(v) -1$
	without crossing~$H(v, i, p_o)$; see \cref{fig:zeroCrossings:s:blocked}.
	Hence we would definitely get a crossing for this embedding of~$T(v)$ later on
	and thus reject it immediately.
	
  \item \textbf{there is a site~$\bm{ p_o \in P \setminus P(v) }$ but not trapped inside~$\bm{ H(v, i) }$:}
    Suppose that the leader~$s_o$ of~$p_o$ can reach positions~$j_1, \ldots, j_{k_o}$ without intersecting~$H(v, i, p_o)$.
  	Consider the leader~$s_k$ of~$p_k$ for the current embedding of~$T(v)$.
  	Note that~$s_k$ prevents~$s_o$ from reaching either any position to the left of~$i$
  	or to the right of~$i + n(v) -1$; see \cref{fig:zeroCrossings:s:left}. 
  	If this means that $s_k$ cannot reach any position, then we reject the embedding.
  	Otherwise we would want to store this embedding of~$T(v)$ and 
  	an embedding of $T(v)$ where $s_k$ can reach a position on the ``other'' side of $p_o$ (if it exists).
  	However, we have to consider all other sites of~$P \setminus P(v)$ in~$H(v, i)$, which we do as follows. 
\end{enumerate}

\begin{figure}[htb]
  \centering
	\begin{subfigure}[t]{0.32 \linewidth}
		\centering
		\includegraphics[page=1]{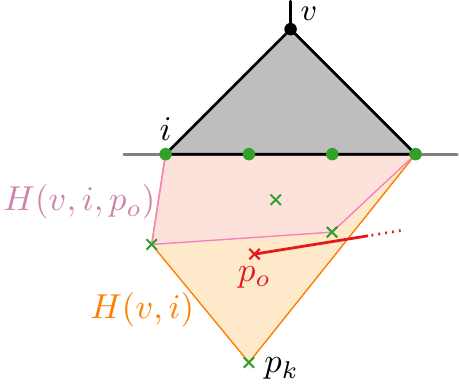}
		\caption{The non-$T(v)$ site~$p_o$ inside~$H(v, i)$
		may connect only to a position far to the right.}
		\label{fig:zeroCrossings:s:blocked}
	\end{subfigure} 
	\hfill
    \begin{subfigure}[t]{0.32 \linewidth}
		\centering
		\includegraphics[page=2]{sZeroCrossing}
		\caption{The leader of the non-$T(v)$ site~$p_o$ inside~$H(v, i)$ cannot
		reach positions to the left of position~$i$.}
		\label{fig:zeroCrossings:s:left}
	\end{subfigure}
	\hfill
	\begin{subfigure}[t]{0.32 \linewidth}
		\centering
		\includegraphics[page=3]{sZeroCrossing}
		\caption{The leader of one non-$T(v)$ site can go the left,
		the two others can go to the right. }
		\label{fig:zeroCrossings:s:partition}
	\end{subfigure}
  \caption{Trying to find a leaf order that induces zero crossings of \sleaders, 
  we store or reject embeddings of~$T(v)$ based on non-$T(v)$ sites (i.e.\ sites in $P \setminus P(v)$) in~$H(v, i)$.}
  \label{fig:zeroCrossings:s}
\end{figure}

There are at most~$n - n(v)$ others sites in~$H(v)$.
If any of them is trapped, we reject the embedding.
Assume otherwise, namely that for the current embedding, all of them can reach a position outside of $T(v)$.
The leader of~$p_k$ then partitions these sites into those 
that can go out to the left and those that can go out to the right.
Hence, among all suitable embeddings of $T(v)$
these sites can be partitioned in at most~$n(v)$ different ways 
(since the leader of $p_k$ can go to only that many positions); 
see \cref{fig:zeroCrossings:s:partition}.
Furthermore, since we have \sleaders, the choice of positions for the other sites of $T(v)$
only influence whether another site is trapped
but not which positions their leaders can reach.
So for each such partition, we need to store only one embedding.
Therefore, before storing a suitable embedding of~$T(v)$, 
we first check whether we already store an embedding where~$\ell_k$ is at the same position. 

We can handle each of the~$\Oh(n(v)^2)$ embeddings of~$T(v)$ in~$\Oh(n^2)$ time each.
With~$n$ positions and~$\Oh(n)$ vertices, we get a running time in~$\Oh(n^6)$.

\paragraph{\texttt{po}-Leaders.}
As with \sleaders, we want to store at most~$\Oh(n(v))$ embeddings of~$T(v)$ for \poleaders.
Let~$H'(v, i)$ be the rectangle that horizontally
spans from positions~$i$ to~$i + n(v) -1$ and vertically from~$p_k$ to the top of~$R$. 
For the current embedding of $T(v)$ and for any site~$p_o \in P \setminus P(v)$ that lies insides~$H'(v, i)$,
we check whether the horizontal segment of the leader $s_o$ of $p_o$ can leave~$H'(v, i)$
without intersecting a vertical segment of a leader of $T(v)$.
If this is not the case for a leader, then we reject the embedding; see \cref{fig:zeroCrossings:po:blocked}.
Otherwise, the leader~$s_k$ of~$p_k$ determines for each $s_o$
whether it can leave~$H'(v, i)$ on the left or on the right side.
Therefore, $s_k$ partitions the sites in~$P \setminus P(v)$ that lie insides~$H'(v, i)$
and we need to store only one suitable embedding for each partition; see \cref{fig:zeroCrossings:po:partition}.
Note that the horizontal segments of the leader $s$ of any site of $P(v)$
that lies outside of~$H'(v, i)$ always spans to at least~$H'(v, i)$.
Therefore whether $s$ intersects with another leader later on outside of $H'(v,i)$
is independent of the embedding of $T(v)$.  
The running time for \poleaders is the same as for \sleaders
and thus also in~$\Oh(n^6)$. 
\end{proof}

\begin{figure}[htb]
  \centering
	\begin{subfigure}[t]{0.44 \linewidth}
		\centering
		\includegraphics[page=1]{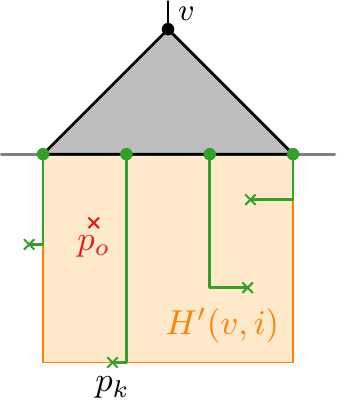}
		\caption{The non-$T(v)$ site~$p_o$ inside $H'(v,i)$
		is trapped by vertical segments of leaders of $T(v)$.} 
		\label{fig:zeroCrossings:po:blocked}
	\end{subfigure}
	$\quad$$\quad$
    \begin{subfigure}[t]{0.44 \linewidth}
		\centering
		\includegraphics[page=2]{poZeroCrossing}
		\caption{The leader of $p_k$ partitions the non-$T(v)$ sites inside $H'(v, i)$ 
		into whether their leader leaves~$H'(v, i)$ to the left or the right.}
		\label{fig:zeroCrossings:po:partition}
	\end{subfigure}
  \caption{Trying to find a leaf order that induces zero crossings of \poleaders, 
  we store or reject embeddings of~$T(v)$ based on other sites in~$H'(v, i)$.}
  \label{fig:zeroCrossings:po}
\end{figure}

\subsection{Efficiently Solvable Instances} 
\label{sec:easy}
We now make some observations about the structure of geophylogeny drawings.
This leads to an $\Oh(n\log n)$-time algorithm for crossing minimization 
on a particular class of ``geometry-free'' instances, 
and forms the basis for our FPT algorithm and ILP.

Consider a drawing~$\Gamma$ of a geophylogeny~$G$ with \sleaders and leaf order~$\pi$.
Let $B$ be the line segment between leaf position 1 (left) and leaf position $n$ (right);
let the \emph{\sarea} of a site $p_i$ be the triangle spanned by $p_i$ and $B$.
Note that the leader~$s_i$ lies within this triangle in any drawing.
Now consider two sites $p_i$ and $p_j$ that lie outside each other's \sarea.
Independently of the embedding of the tree, $s_i$ always passes $p_j$ on the same side: 
see \cref{fig:easy} where, for example, $s_2$ passes left of~$p_4$ in any drawing.
As a result, if $p_i$ lies left of~$p_j$,
then $s_i$ and $s_j$ cross if and only if 
the leaf $\ell_i$ is positioned right of the leaf $\ell_j$,
i.e.\ $\pi(\ell_i) > \pi(\ell_j)$.
The case where $p_i$ is right of $p_j$ is flipped.
We call such a pair $\croc{p_i, p_j}$ \emph{geometry free} 
since purely the \emph{order} of the corresponding leaves suffices to recognize if their leaders cross: 
the precise geometry of the leaf positions is irrelevant.
Note that symmetrically $\croc{p_j, p_i}$ is also geometry free.

\begin{figure}[bth]
  \centering
  	\begin{subfigure}[t]{0.42 \linewidth}
		\centering
		\includegraphics[page=1]{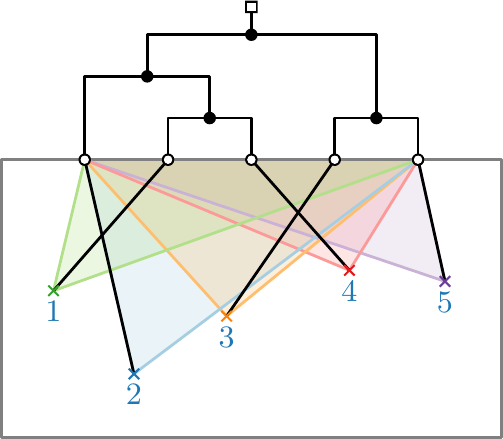}
		\caption{A geometry-free drawing for \sleaders: 
		no site lies inside the \sarea of another~site.}
		\label{fig:easy:s}
	\end{subfigure}
	$\qquad$
    \begin{subfigure}[t]{0.42 \linewidth}
		\centering
		\includegraphics[page=2]{easyInstances}
		\caption{A geometry-free instance for \poleaders: 
		no site lies inside the \poarea of another~site.}
		\label{fig:easy:po}
	\end{subfigure}
  \caption{In a geometry-free instance the leaf order~$\pi$ fully determines
  if any two leaders cross.}
  \label{fig:easy}
\end{figure}

Conversely, consider a site $p_k$ that lies inside the \sarea of $p_i$.
Whether the leaders $s_i$ and~$s_k$ cross depends on the placement 
of the leaves $\ell_i$ and $\ell_k$ in a more complicated way than just their relative order:
$s_i$ might pass left or right of $p_k$
and it is therefore more complicated to determine whether $s_i$ and $s_k$ cross.
In this case, we call the pair $\croc{p_i, p_k}$ \emph{undecided}.
See~\cref{fig:dependencies}, where $p_1$ is undecided with respect to $p_2$.

\begin{figure}[bth]
  \centering
	\includegraphics[page=4]{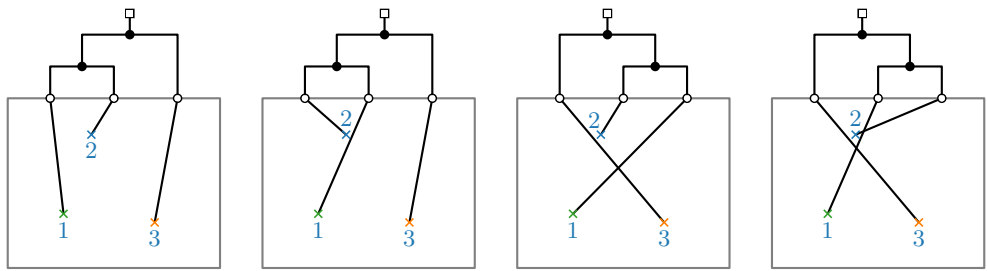}
	\caption{Drawings of the same geophylogeny with four different leaf orders.
	Note that~$s_1$ and~$s_3$ cross 
	if and only if~$\ell_3$ is left of~$\ell_1$.
	On the other hand, whether~$s_1$ and~$s_2$ cross or not depends more specifically on the positions of~$\ell_1$ and~$\ell_2$.}
  \label{fig:dependencies}
\end{figure}

Analogously, for \poleaders, let the \emph{\poarea} of~$p_i$ be the rectangle
that spans horizontally from position~$1$ to position~$n$
and vertically from~$p_i$ to the top of~$R$; see \cref{fig:easy:po}.
A pair $\croc{p_i, p_j}$ of sites is \emph{geometry free}
if $p_i$ does not lie in the \poarea of $p_j$ or vice versa.
A pair $\croc{p_i, p_k}$ of sites is called \emph{undecided},
if $p_k$ lies in the \poarea of $p_i$.
Note that the pairs are ordered and that, for an undecided pair $\croc{p_i, p_k}$, the pair $\croc{p_k, p_i}$ is not undecided for \sleaders
but can be undecided for \poleaders if $\y(p_i) = \y(p_k)$.

We call a geophylogeny \emph{geometry free} (for \sorpoleaders) if all pairs of sites are geometry free.
While it seems unlikely that a geophylogeny is geometry free for \poleaders in practice,
it is not entirely implausible for \sleaders:
for example, researchers may take their samples along a coastline, a river, or a valley,
in which case the sites may lie relatively close to a line.
Orienting the map such that this line is horizontal might then result in a geometry-free instance.
For example, Lazzari, Becerro, Sanabria-Fernandez, and Martín-López~\cite{exampleCoast} considered coastal regions of Andalusia
and their geophylogenies (with dendrograms) could be made geometry-free by rotating the map slightly.
Furthermore, unless two sites share an x-coordinate,
increasing the vertical distance between the map and the tree
eventually results in a geometry-free drawing for \sleaders;
however, the required distance might be impractically large.

Next we show that the number of leader crossings in a geometry-free drawing
can be minimized efficiently using Fernau et~al.'s~\cite{FKP10} algorithm for the \textsc{One-Sided Tanglegram}~problem.
\pagebreak[4]

\begin{proposition} \label[proposition]{clm:easy}
Given a geometry-free geophylogeny $G$ on $n$ taxa, 
a drawing $\Gamma$ with the minimum number of leader crossings 
can be found in~$\Oh(n \log n)$~time, for both \sandpoleaders.
\end{proposition}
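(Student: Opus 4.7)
The plan is to reduce crossing minimization on a geometry-free instance to the \textsc{One-Sided Tanglegram} problem and then invoke Fernau~\etal's algorithm~\cite{FKP10}.

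First, I would re-index the sites so that $\x(p_1) < \x(p_2) < \cdots < \x(p_n)$; this fixes the left-to-right order $\sigma$ of the sites on the map. The discussion immediately preceding the proposition already establishes that in a geometry-free instance for \sleaders, the leaders $s_i$ and $s_j$ cross if and only if $\pi(\ell_i)$ and $\pi(\ell_j)$ are in the order opposite to that of $i$ and $j$. I would verify the analogous statement for \poleaders by a short case analysis on the two horizontal and two vertical segments of a leader pair: since neither site lies in the other's \poarea, no horizontal segment can intersect the other leader's vertical segment in a way that depends on the specific position of the sites, so the relative order of $\pi(\ell_i)$ and $\pi(\ell_j)$ is the sole determinant. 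In either case, the total number of leader crossings in a drawing $\Gamma$ equals the number of inversions between $\sigma$ and the leaf order $\pi$, and hence depends only on~$\pi$.

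Finding a downward planar embedding of $T$ whose induced leaf order $\pi$ minimizes the number of inversions relative to the fixed order $\sigma$ is precisely the \textsc{One-Sided Tanglegram} problem: $\sigma$ can be realized as the (unique) downward planar leaf order of, for example, a caterpillar $T'$, and the straight-line drawing between $T$ and $T'$ contains a crossing for exactly those pairs of leaves whose order is inverted between $\pi$ and $\sigma$. Fernau~\etal's algorithm therefore returns an embedding of $T$ that minimizes the number of leader crossings in $\Gamma$ in $\Oh(n \log n)$ time, yielding the claimed bound for both leader types.

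The main obstacle I anticipate is the \poleaders case of the reduction: one must confirm that the geometry-free condition truly rules out any crossing configuration whose occurrence depends on coordinates other than the x-order of the sites, in particular ruling out degenerate overlaps between two horizontal segments. Once this geometric verification is in place, the rest of the argument is an off-the-shelf application of the cited algorithm and requires no further combinatorial work.
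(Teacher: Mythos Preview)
Your reduction to \textsc{One-Sided Tanglegram} is exactly the paper's approach, and for \sleaders your argument is correct and matches the paper.

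For \poleaders, however, there is a genuine gap. You order the sites by x-coordinate and claim that two leaders cross if and only if the leaf order inverts this x-order. That is false. Consider two sites $p_i,p_j$ both strictly left of leaf position~$1$, with $\x(p_i)<\x(p_j)$ but $\y(p_i)<\y(p_j)$. This pair is geometry free (neither lies in the other's \poarea, which spans only between positions $1$ and $n$). Both horizontal segments run rightwards; the higher one, that of~$s_j$, meets the vertical segment of~$s_i$ precisely when $\pi(\ell_i)\le\pi(\ell_j)$. So the leaders cross when the leaf order \emph{agrees} with the x-order of the sites, not when it inverts it. Your inversion count therefore does not equal the number of crossings.

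What goes wrong is subtle: you correctly observe that for a geometry-free pair only the relative leaf order matters, but you assume the ``preferred'' relative order is always given by the x-order of the sites. For \poleaders it is not. The paper fixes this by defining a different linear order on the sites for the \poleader case: for $p_i$ left of $p_j$, put $p_i$ before $p_j$ unless the \poleader from $p_i$ to position~$1$ crosses the \poleader from $p_j$ to position~$n$, in which case put $p_j$ first. One then checks that this is indeed a total order and that crossings correspond to inversions with respect to it. Your hedge about ``degenerate overlaps'' underestimates the issue; the required fix is a different reference order, not merely a general-position assumption.
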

\begin{proof} 
To use Fernau et~al.'s~\cite{FKP10} algorithm, 
we transform $G$ into a \emph{one-sided tanglegram}~$(T_{\text{fix}}, T_{\text{vari}})$
that is equivalent in terms of crossing to~$\Gamma$; see~\cref{fig:easy:solving}.
We take the sites~$P$ as the leaves of the tree~$T_{\text{fix}}$ with fixed embedding 
and embed it such that the points are ordered from left to right;
the topology of~$T_{\text{fix}}$ is arbitrary.
As the tree~$T_{\text{vari}}$ with variable embedding, we take the phylogenetic tree~$T$.

If~$\Gamma$ uses \sleaders, then we assume that the sites of~$G$ are indexed from left to right.
If~$\Gamma$ uses \poleaders, we define an (index) order on~$P$ as follows.
Let~$p_i$ be a site and~$p_j$ a site to the right of it; consider the leader that connects $p_i$ to leaf position~$1$ and the leader that connects $p_j$ to leaf position $n$.
If these leaders cross we require that $i$ is after $j$, otherwise it must be before $j$. 
Note that this implies that $p_i$ and $p_j$ are either both left of position 1 or both right of position $n$. 
(It is easily shown that this defines an order; see also \cref{fig:easy:po}.)

Let~$\pi'$ be a leaf order of~$T_{\text{vari}}$.
Further let~$s'_i$ denote the connection of the leaf corresponding to~$p_i$ 
in~$T_{\text{fix}}$ and the leaf~$\ell_i$ in~$T_{\text{vari}}$.
Note that two connections~$s'_i$ and~$s'_j$ with $i<j$
cross in the tanglegram if and only if~$\pi'(\ell_i) > \pi'(\ell_j)$.

\begin{figure}[tbh]
  \centering
	\includegraphics[page=3]{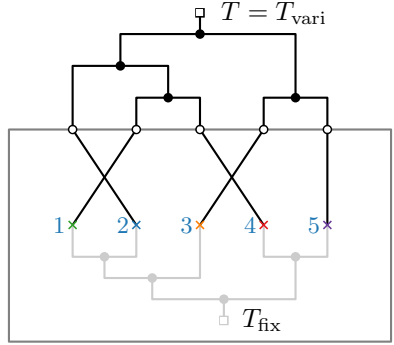}
	\caption{A geometry-free geophylogeny and a one-sided tanglegram~$(T_{\text{fix}}, T_{\text{vari}})$ 
	that have the same combinatorics (in terms of leader crossings) 
	as the two geometry-free instances in \cref{fig:easy}.}
  \label{fig:easy:solving}
\end{figure}

Since~$G$ is geometry free, the crossings in the tanglegram correspond one-to-one 
with those in the geophylogeny drawing with leaf order~$\pi'$;
see again \cref{fig:easy,fig:easy:solving}. 
Hence, the number of crossings of~$(T_{\text{fix}}, T_{\text{vari}})$ 
can be minimized in~$\Oh(n \log n)$~time using an algorithm of Fernau et~al.\ \cite{FKP10}.
The resulting leaf order for~$T_{\text{vari}}$ then also minimizes the number of leader crossings in~$\Gamma$.
\end{proof}

\subsection{FPT Algorithm} 
\label{sec:fpt}
In practice, most geophylogenies are not geometry free,
yet some drawings with \sleaders might have only few sites inside the \sarea of other sites.
Capturing this with a parameter $k$, we can develop an FPT algorithm, that is,
an algorithm that runs in $f(k)p(n)$ time where $f$ is a computable function that only depends on $k$ and $p$ is a polynomial function (see also Niedermeier~\cite{Nie06}).
The idea is as follows. Suppose we use \sleaders and there is exactly one undecided pair $\croc{p_i, p_j}$,
i.e.\ $p_j$ lies inside the \sarea of~$p_i$; see \cref{fig:fpt:idea:geophylo}.
For a particular leaf order, we say the leader $s_i$ \emph{lies left (right)} of $p_j$
if a horizontal ray that starts at $p_j$ and goes to the left (right) intersects $s_i$;
conversely, we say that $p_j$ \emph{lies right (left)} of $s_i$. 

Suppose now that we restrict~$s_i$ to lie left of~$p_j$ 
(as $s_2$ lies left of $p_3$  in \cref{fig:fpt:idea:right}).
This restricts the possible positions for $\ell_i$ and
effectively yields a \emph{restricted} geometry-free geophylogeny.
The idea for our FPT algorithm is thus to use the algorithm from \cref{clm:easy} on restricted geometry-free instances
obtained by assuming that~$s_i$ lies to the left or to the right of~$p_j$; see again \cref{fig:fpt:idea}.
In particular, we extend Fernau et~al.'s dynamic programming algorithm~\cite{FKP10} to handle
\emph{restricted} one-sided tanglegrams at a cost in runtime.

\begin{figure}[tbh]
  \centering
  	\begin{subfigure}[t]{0.3 \linewidth}
		\centering
		\includegraphics[page=1]{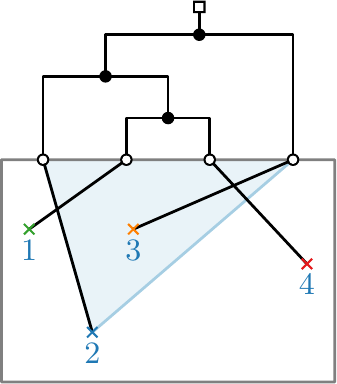}
		\caption{A non-geometry-free geophylogeny since~$p_3$
		lies in the \sarea of~$p_2$.}
		\label{fig:fpt:idea:geophylo}
	\end{subfigure}
	\hfill
    \begin{subfigure}[t]{0.3 \linewidth}
		\centering
		\includegraphics[page=2]{fpt}
		\caption{Restricted geometry-free geophylogeny where we require that~$p_3$ lies right of~$s_2$. 
		Thus~$\ell_2$ cannot be at positions 3 or~4.}
		\label{fig:fpt:idea:right}
	\end{subfigure}
	\hfill
	\begin{subfigure}[t]{0.3 \linewidth}
		\centering
		\includegraphics[page=3]{fpt}
		\caption{Restricted geometry-free geophylogeny where we require that~$p_3$ lies left of~$s_2$. 
		Thus~$\ell_2$ cannot be at positions 1 or~2.}
		\label{fig:fpt:idea:left}
	\end{subfigure}
  \caption{We transform the non-geometry-free geophylogeny~$G$
  into a restricted geometry-free geophylogeny by deciding whether $p_3$ lies left or right of $s_2$.}
  \label{fig:fpt:idea}
\end{figure}

\begin{lemma} \label{clm:tricky}
The number of connection crossings in a restricted one-sided tanglegram~$\cT$ on $n$ leaves
can be minimized in $\Oh(n^3)$ time. 
\end{lemma}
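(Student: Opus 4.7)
The plan is to adapt Fernau et al.'s dynamic programming for the unrestricted one-sided tanglegram by adding the placement position as an additional parameter. Without restrictions, the embedding of a subtree of $T_{\text{vari}}$ decouples cleanly from its absolute location on the baseline; once some leaves are forbidden from certain positions, that decoupling breaks and the DP state has to track where each subtree is placed.

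Concretely, I would define $F(v, i)$ as the minimum number of crossings among connections incident to $L(T(v))$, taken over all embeddings of $T(v)$ that place its leftmost leaf at position $i$ and respect every position restriction on a leaf of $T(v)$. For a leaf $v$ corresponding to $\ell_k$, set $F(v, i) = 0$ if $i$ is an allowed position for $\ell_k$ and $F(v, i) = \infty$ otherwise. For an internal vertex $v$ with children $x$ and $y$, use
\begin{equation*}
F(v, i) = \min\bigl\{\, F(x, i) + F(y, i + n(x)) + c_{xy},\ F(y, i) + F(x, i + n(y)) + c_{yx} \,\bigr\},
\end{equation*}
where $c_{xy}$ counts pairs $(a, b)$ with $\ell_a \in L(T(x))$, $\ell_b \in L(T(y))$, and $p_a$ to the right of $p_b$ in $T_{\text{fix}}$, and $c_{yx}$ is defined symmetrically. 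The sought optimum is $F(\rho, 1)$ for $\rho$ the root of $T_{\text{vari}}$. Correctness hinges on the standard observation that when $x$ is placed left of $y$, every leaf of $T(x)$ lies left of every leaf of $T(y)$, so the number of crossings between the two sides depends only on the leaf sets and not on the internal embeddings; the two subproblems therefore decouple once the child order is fixed. Restrictions decompose analogously, since a leaf $\ell_k \in T(v)$ sits at a position determined by $i$ and by the embedding of the unique child subtree containing $\ell_k$, so all restriction bookkeeping is captured inside $F(x, \cdot)$ and $F(y, \cdot)$ separately.

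For the running time, there are $\Oh(n)$ vertices and $\Oh(n)$ positions, hence $\Oh(n^2)$ table entries; evaluating the recurrence at each entry in $\Oh(n)$ time (including on-the-fly extraction of the relevant cross term) yields the claimed $\Oh(n^3)$ bound. The main obstacle I anticipate is not the recurrence itself but verifying that the restrictions admit this clean per-child factorization — essentially arguing that a restriction on one leaf cannot couple two sibling subtrees into a constraint that spans both sides, which is exactly what allows the two children's subproblems to be solved independently within each branch of the $\min$. Once this is settled, the argument mirrors the standard one-sided tanglegram DP with an extra position axis.
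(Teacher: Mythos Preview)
Your proposal is correct and takes essentially the same route as the paper: augment Fernau et~al.'s DP with the leftmost-leaf position as an additional parameter and encode forbidden positions as $\infty$ in the base case, paying a factor of $n$ for the extra dimension to reach $\Oh(n^3)$. The paper's proof is terser but identical in content; your concern about restrictions coupling sibling subtrees does not arise, since the restrictions here are per-leaf position intervals and therefore factor cleanly into the leaf base case exactly as you wrote it.
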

\begin{proof}
Let $\cT = (T_{\text{fix}}, T_{\text{vari}})$; we write $T$ for $T_{\text{vari}}$.
Let $x$ and $y$ be the children of a vertex $v$ of~$T$.
Fernau et~al.'s algorithm would compute the number of crossings $\cro{x,y}$ and $\cro{y,x}$
between the connections of $T(x)$ and the connections of $T(y)$
for when $x$ is the left or right child of $v$, respectively, in $\Oh(n(x) + n(y))$ time.
For an unrestricted one-sided tanglegram, 
this can be done independent of the positions of $T(x)$ and $T(y)$.
For $\cT$ however this would not take into account the forbidden positions of leaves.
Hence, as in our algorithm from \cref{clm:internal},
we add the position of the leftmost leaf of $T(v)$ as additional parameter in the recursion.
This adds a factor of $n$ to the running time
and thus, forgoing Fernau et~al.'s data structures, results in a total running time in~$\Oh(n^3)$. 
\end{proof}

Before describing an FPT algorithm based on restricted geometry-free geophylogenies,
let us consider the example from \cref{fig:dependencies} again.
There the drawing $\Gamma$ has three sites~$p_1$, $p_2$, $p_3$ 
where~$p_2$ lies in the \sarea of both~$p_1$ and~$p_3$.
We can get four restricted geometry-free geophylogenies by
requesting that~$p_2$ lies to the left or to the right of~$s_1$ and of~$s_3$.
Here one of the instances, $G'$, stands out,
namely where~$p_2$ lies to the left of~$s_1$ and to the right of~$s_3$; see \cref{fig:fpt:conflict:bad}.
In the restricted one-sided tanglegram $\cT'$ corresponding to $G'$, 
we would want~$p_2$ left of~$p_1$ and right of~$p_3$.
This stands in conflict with~$p_1$ being left of~$p_3$ based on their indices.
We thus say $p_1$, $p_2$, and $p_3$ 
form a \emph{conflicting triple} $\croc{p_1, p_2, p_3}$, which we resolve as follows.
Note that~$s_1$ and~$s_3$ cross for any valid leaf order for $G'$.
We thus use the order~$p_3$,~$p_2$,~$p_1$ for $\cT'$ (see \cref{fig:fpt:conflict:tangle})
and, since $\cT'$ does not contain the crossing of~$s'_1$ and~$s'_3$,
we add one extra crossing to the computed solution.
A conflicting triple for drawings with \poleaders is defined analogously.

\begin{figure}[tbh]
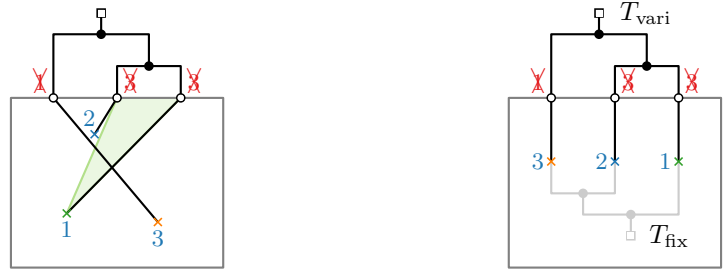

  \centering
    \begin{subfigure}[t]{0.41 \linewidth}
		\centering
		\includegraphics[page=4]{fpt}
		\caption{Restricted geometry-free geophylogeny with conflicting triple $\croc{p_1, p_2, p_3}$.
		Note that $s_1$ and $s_3$ cross for any valid leaf order.}
		\label{fig:fpt:conflict:bad}
	\end{subfigure}
	\quad
    \begin{subfigure}[t]{0.41 \linewidth}
		\centering
		\includegraphics[page=5]{fpt}
		\caption{Restricted one-sided tanglegram $\cT'$ with order $p_3$, $p_2$, $p_1$.
		Note that $s'_1$ and $s'_3$ do not cross for any valid leaf order.}
		\label{fig:fpt:conflict:tangle}
	\end{subfigure}
  \caption{A conflicting triple in a restricted geometry-free instance requires a different order
  in the corresponding restricted one-sided tanglegram and storing the ``lost'' crossing.}
  \label{fig:fpt:conflict}
\end{figure}

\begin{theorem} \label{clm:fpt}
Given a geophylogeny $G$ on $n$ taxa and with $k$ undecided pairs of sites,
a drawing~$\Gamma$ of~$G$ with minimum number of crossings can be computed in~$\Oh(2^k \cdot (k^2 + n^3))$ time, 
for both \sandpoleaders.
\end{theorem}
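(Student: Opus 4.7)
The plan is to combine the branching idea illustrated in \cref{fig:fpt:idea} with the algorithm for restricted one-sided tanglegrams from \cref{clm:tricky}. Let $U$ be the set of $k$ undecided pairs in $G$. For each undecided pair $\croc{p_i, p_j}$, the optimal drawing commits $s_i$ to pass either left of $p_j$ or right of $p_j$; we branch exhaustively over all $2^k$ such joint commitments. Each branch $b$ specifies a restricted geometry-free geophylogeny $G_b$: for every $\croc{p_i, p_j} \in U$, the commitment restricts the set of admissible positions for $\ell_i$ to those for which $s_i$ lies on the prescribed side of $p_j$. The optimum over all $b$ will equal the optimum over all drawings of $G$, since every valid leaf order for $G$ realises exactly one of the $2^k$ sidedness patterns.

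For a fixed branch $b$, I would then mimic the reduction from \cref{clm:easy}: build the one-sided tanglegram $(T_{\text{fix}}, T_{\text{vari}})$ where the fixed side orders the sites as in the proof of \cref{clm:easy}, and transfer the position restrictions from $G_b$ onto the variable-side tree to obtain a \emph{restricted} one-sided tanglegram $\cT_b$. The subtlety, as illustrated by \cref{fig:fpt:conflict}, is that the commitments on $U$ may force a different left-to-right order of the sites on the fixed side than the geometric one — producing the conflicting triples $\croc{p_i, p_j, p_k}$ described just before the theorem. For every such triple I would reorder the affected sites on the fixed side according to the commitments, and track the crossings that are now forced by the geometry but no longer counted inside $\cT_b$, adding them as a constant offset $c_b$ to be included in the final count for this branch. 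Since each conflict is witnessed by two undecided pairs sharing a site, there are at most $\Oh(k^2)$ such triples, and they can be detected and resolved in $\Oh(k^2)$ time by iterating over all pairs of commitments in $b$.

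With $\cT_b$ and offset $c_b$ in hand, I would invoke \cref{clm:tricky} to compute the crossing-minimum leaf order for $\cT_b$ in $\Oh(n^3)$ time and report $c_b$ plus this number as the optimum for branch $b$. Returning the minimum over all $2^k$ branches, together with the corresponding leaf order, yields the global optimum. The total running time is $\Oh(2^k \cdot (k^2 + n^3))$ as claimed, and the argument works identically for \sandpoleaders because the notions of undecided pair, conflicting triple, and restricted tanglegram have been defined for both leader types — only the geometric test defining $U$ (and hence the value of $k$) differs.

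The main obstacle I anticipate is the correctness of the conflict-resolution step: one must argue precisely that, after fixing the sidedness of every undecided pair, the leaders of any two sites $p_i, p_j$ that participate in a conflicting triple must cross in every drawing realising branch $b$, so that shifting them to be ``uncrossed'' in $\cT_b$ and adding $1$ to $c_b$ is sound. A careful case analysis on the relative geometric positions of the three sites in a conflicting triple, combined with the observation that the commitments from $U$ fully determine on which side each relevant leader passes each relevant site, should close this gap for both leader types. Everything else — the $2^k$ enumeration, the reduction to \cref{clm:tricky}, and the arithmetic of the running time — is then routine.
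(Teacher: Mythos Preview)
Your high-level plan matches the paper's approach: branch over the $2^k$ sidedness commitments, turn each branch into a restricted one-sided tanglegram, and solve with \cref{clm:tricky}. However, you have misidentified the main obstacle. The soundness of adding~$1$ to $c_b$ for a conflicting triple is essentially immediate from the definition given just before the theorem (the two outer leaders of the triple are forced to cross). The real difficulty is one you do not mention: after locally ``reordering the affected sites'' for \emph{every} conflicting triple, why do these local swaps assemble into a consistent \emph{total} order on~$P$ at all? Different triples can overlap, and the swaps they demand could in principle be cyclically inconsistent, in which case no fixed-side order for $\cT_b$ exists and the reduction to \cref{clm:tricky} breaks down.

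The paper handles this by building the complete tournament $K$ on~$P$, orienting each edge according to the branch's commitments, then \emph{reorienting} the edge $\set{p_i,p_l}$ for every conflicting triple $\croc{p_i,p_j,p_l}$, and finally proving that the resulting~$K$ is acyclic so that a topological order can serve as the fixed side of $\cT_\omega$. The acyclicity proof is the technical core: it is a four-way case analysis showing that no directed triangle in~$K$ can contain $0$, $1$, $2$, or $3$ reoriented edges, using geometric arguments about where sites must lie relative to the committed leader cones. Your proposal needs this argument (or an equivalent one) to be complete; ``reorder the affected sites according to the commitments'' is not yet a well-defined procedure.
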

\begin{proof}
Our FPT algorithm converts~$G$ into up to~$2^k$ restricted geometry-free instances,
solves the corresponding restricted one-sided tanglegrams with \cref{clm:tricky}, 
and then picks the leaf order that results in the least leader crossings for~$\Gamma$.
Therefore, for each undecided pair $\croc{p_i, p_j}$, 
the algorithm tries routing~$s_i$ either to the left of or to the right of~$p_j$.
Since there are~$k$ such pairs, there are~$2^k$ different combinations.
However, for some combinations a drawing might be over restricted and no solution exists.

To go through all possible combinations, we branch for each of the~$k$ undecided pairs $\croc{p_i, p_j}$,
whether~$s_i$ is to the left or to the right of~$p_j$.
Let $\omega$ encode one sequence of $k$ such decisions.
Below we show how to construct the restricted geometry-free instances~$G_\omega$ 
and the corresponding restricted one-sided tanglegram~$\cT_\omega$ in~$\Oh(k^2 + n^2)$ time.
Since the number of crossings in the restricted geometry-free drawing can 
then be minimized in~$\Oh(n^3)$ time with \cref{clm:tricky},
the claim on the running time follows.  

In order to construct~$\Gamma_\omega$ efficiently, 
we keep track of the positions where a leaf~$\ell_i$, for~$i \in  \set{1, \ldots, n}$,  
can be placed with an interval~$[a_i, b_i]$; 
at the start we have~$a_i = 0$ and~$b_i = n$.
Suppose that when going through the~$k$ pairs and for the current~$\omega$,
we get that~$s_i$ becomes restricted by, say, having to be left of a site~$p_j$. 
Then we compute the rightmost position where~$\ell_i$ could be placed
and update~$b_i$ accordingly. This can both be done in constant time.
If at any moment~$a_i > b_i$, then the drawing for~$\omega$ 
is over restricted and there is no viable leaf order. 
We then continue with the next combination.
Otherwise, after all~$k$ pairs, we have restricted~$G$ to~$G_\omega$ in~$\Oh(k)$ time.

Next, we explain how to find an order of~$P$ for~$\cT_\omega$
that corresponds to~$G_\omega$.
In particular, we have to show that resolving all conflicting triples as described above
in fact yields an order of~$P$.
To this end, let~$K$ be the complete graph with vertex set~$P$.
(We assume again the same order on the sites as in \cref{clm:easy}.)
For any two sites~$p_i, p_j \in P$ with~$i < j$,
we orient~$\set{p_i, p_j}$ as~$(p_j, p_i)$
if~$\croc{p_i, p_j}$ is an undecided pair and
$s_i$ is right of~$p_j$ in~$G_\omega$; otherwise we orient it as~$(p_i, p_j)$.
We then check whether any pair of undecided pairs forms a conflicting triple.
For any conflicting triple~$\croc{p_i, p_j, p_l}$ that we find, 
we \emph{reorient} the edge between~$p_i$ and~$p_l$ to~$(p_l, p_i)$.
We claim that~$K$ is acyclic (and prove it below).
Therefore we can use a topological order of~$K$ as order for~$\cT_\omega$.
For~$i \in \set{1, \ldots, n}$,
we set the dynamic programming values for leaf~$\ell_i$ 
at all positions in~$\cT_\omega$ outside of~$[a_i, b_i]$ to infinity.
We can find all conflicting triples in~$\Oh(k^2)$ time,
construct and orient~$K$ in~$\Oh(n^2)$ time, 
and initialize~$\cT$ in~$\Oh(n^2)$ time.

Lastly, we show that~$K$ is indeed acyclic after resolving all conflicting triples.
Suppose to the contrary that there is a directed cycle~$C'$ in~$K$.
Since the underlying graph of~$K$ is the complete graph, 
there is then also a directed triangle~$C$ in~$K$.
To arrive at a contradiction, we show that~$C$ cannot have 0, 1, 2, or 3 reoriented edges.
Let~$C$ be on~$p_i$,~$p_j$,~$p_l$ with edges~$(p_i, p_j)$,~$(p_j, p_l)$, and~$(p_l, p_i)$.

\textbf{$\bm{ C }$ contains 0 reoriented edges:}
Since~$p_i$,~$p_j$, and~$p_l$ do not form a conflicting triple,
an easy geometric case distinction shows that this is not geometrically realizable.
For example, if, say,~$p_i$ is lower than~$p_j$ and~$p_j$ is lower than~$p_l$, 
then~$p_j$ is right of~$s_i$ and~$p_l$ is right of~$s_j$.
However, then~$p_l$ cannot be left of~$s_i$ and~$p_i$ cannot be right of~$s_l$; see \cref{fig:fpt:triangle:zero}.

\begin{figure}[tbh]
  \centering
    \begin{subfigure}[t]{0.31 \linewidth}
		\centering
 		\includegraphics[page=6]{fpt}
		\caption{If~$C$ contains no reoriented edge
		and $p_i$, $p_j$, $p_l$ are not a conflicting triple,
		then there is no geometric realization respecting $C$;
		here $(p_l, p_i)$ is not realized.}
		\label{fig:fpt:triangle:zero}
	\end{subfigure}
	\hfill
    \begin{subfigure}[t]{0.31 \linewidth}
		\centering
 		\includegraphics[page=7]{fpt}
		\caption{If~$C$ contains one reoriented edge~$(p_i, p_j)$,
		there is no placement of $p_l$ respecting $C$ 
		where $p_l$ is not part of a conflicting triple with $p_m$ and $p_i$ (as here) or $p_j$.}
		\label{fig:fpt:triangle:one}
	\end{subfigure}
	\hfill
    \begin{subfigure}[t]{0.31 \linewidth}
		\centering
 		\includegraphics[page=8]{fpt}
		\caption{If~$C$ contains two reoriented edges~$(p_i, p_j)$ and~$(p_j, p_l)$,
		we get that~$p_l$ must lie right of~$s_i$.}
		\label{fig:fpt:triangle:two}
	\end{subfigure}
  \caption{Cases for the proof of \cref{clm:fpt} where the directed triangle~$C$ 
  with edges~$(p_i, p_j)$,~$(p_j, p_l)$, and~$(p_l, p_i)$
  is supposed to contains no, one, or two reoriented edges. 
  Colored cones represent a hypothetical possible range for the leader of the corresponding site.}
  \label{fig:fpt:triangle}
\end{figure}

\textbf{$\bm{ C }$ contains 1 reoriented edge:}
Suppose that~$(p_i, p_j)$ has been reoriented as part of a conflicting triple with~$p_m$.
We show that $p_m$ also lies in the \sarea (\poarea) of $p_l$:
If $p_i$ or $p_j$ lies in the \sarea (\poarea) of $p_l$, then so does $p_m$ by transitivity.
Assuming otherwise, $\croc{p_l, p_j}$ and $\croc{p_l, p_i}$ cannot be undecided pairs.
Therefore $(p_l, p_i)$ implies that either $\croc{p_i, p_l}$ is an undecided pair, meaning $p_l$ is left of $s_i$, or that $l < i$, meaning $p_l$ lies left of $p_i$.
Analogously, $(p_j, p_l)$ implies that $p_l$ lies right of $s_j$ or right of $p_j$ (or both).
Taken together, $p_l$ lies ``between'' $s_j$ and $s_i$ as well as below the crossing of $s_i$ and $s_j$; see \cref{fig:fpt:triangle:one}.
On the other hand, $p_m$ lies lies ``between'' $s_i$ and $s_j$ as well as above the crossing.
Therefore, $p_m$ lies in the \sarea (\poarea) of $p_l$.
However, then based on the orientation of $(p_j, p_l)$, and~$(p_l, p_i)$,
we get that $p_l$ must form a conflict triple with $p_m$ and either $p_i$ or $p_j$.
This stands in contradiction to $C$ containing only one reoriented edge.

\textbf{$\bm{ C }$ contains 2 reoriented edges:}
Suppose that~$(p_i, p_j)$ and~$(p_j, p_l)$ have been reoriented.
We then know that~$s_i$ and~$s_j$ as well as~$s_j$ and~$s_l$ definitely cross.
Therefore,~$p_i$ lies right of~$s_j$ (or the line through~$s_j$) 
and that~$p_j$ lies left of~$s_i$ (or the line through~$s_i$).
Analogously,~$p_j$ lies right of~$s_j$ (or the line through~$s_l$), 
and that~$p_l$ lies left of~$s_j$ (or the line through~$s_j$).
Since~$(p_l, p_i)$ has not been reoriented,
we know that~$s_i$ and~$s_l$ do not necessarily need to cross.
For~$s_l$ to cross~$s_j$ but not~$s_i$, we get that~$p_l$ can only lie right of~$s_i$; see \cref{fig:fpt:triangle:two}. 
This stands in contradiction to the orientation of~$(p_l, p_i)$.

\textbf{$\bm{ C }$ contains 3 reoriented edges:}
Since all three edges of~$C$ have been reoriented, 
this is geometrically equivalent to the first case and thus not realizable.

This concludes the proof that there is no directed triangle in $K$ and hence $K$ is acyclic.
Our FPT algorithm can thus process each of the $2^k$ words in $\Oh(k^2 + n^2)$ time.
\end{proof}

Note that a single site can lie in the \sarea of every other site,
for example, this is likely for a site that lies very close to the top of the map. 
Furthermore, there can be $\Oh(n^2)$ undecided pairs.
In these cases, the running time of the FPT algorithm becomes~$\Oh(2^n n^2)$ or even~$\Oh(2^{(n^2)} n^4)$.
However, a brute-force algorithm that tries all $2^{n-1}$ embeddings of $T$ 
and computes for each the number of leader crossings in $\Oh(n^2)$ time,
only has a running time in~$\Oh(2^n n^2)$.

\subsection{Integer Linear Programming} 
\label{sec:ilp} 
As we have seen above, the problem of minimizing the number of leader crossings
in drawings of geophylogenies is NP-hard and 
the preceding algorithms can be expected to be impractical on realistic instances. 
We now provide a practical method to exactly solve instances of moderate size using integer linear programming (ILP).

For the following ILP, we consider an arbitrary embedding of the tree as \emph{neutral} and describe all embeddings 
in terms of which internal vertices of $T$ are rotated with respect to this neutral embedding,
i.e.\ for which internal vertices to swap the left-to-right order of their two children. 
For two sites $p_i$ and $p_j$, we use $p_i \prec p_j$ to denote that $\ell_i$ is left of $\ell_j$ in the neutral embedding.
Let $U$ be the set of undecided pairs, that is, all ordered pairs $(p,q)$ where $q$ lies inside the \sarea of $p$;
note that these are ordered pairs.
We further assume that position $i$ corresponds to the x-coordinates~$i$.

\paragraph*{Variables and Objective Function.}
The program has three groups of binary variables that describe the embedding and crossings.
\begin{description}
    \item[$\rho_u \in \set{0, 1} \ \forall u \in I(T)$.]
	Rotate internal vertex $u$ if $\rho_u = 1$ and keep its neutral embedding if $\rho_u = 0$.
	Note that rotating the lowest common ancestor of leaves $\ell_i$ and $\ell_j$ is the only way to flip their order, so
	for convenience we write $\rho_{ij}$ to mean $\rho_{\lca{i,j}}$.
	Note, however, that an internal vertex can be the lowest common ancestor of multiple pairs of leaves.

	\item[$d_{pq} \in \set{0, 1} \ \forall (p,q) \in U$.]
	For each undecided pair $(p,q)$: the leader for $p$ should pass to the left of the site $q$ if $d_{pq}=0$ and to the right if $d_{pq}=1$.
	This is well-defined since the pair is undecided.
	
	\item[$\chi_{pq} \in \set{0, 1} \ \forall p,q \in P,\, p < q$.] 
	For each pair of distinct sites: the leaders of $p$ and $q$ are allowed to cross if $\chi_{pq}=1$ and are not allowed to cross if $\chi_{pq}=0$.
\end{description}
There is no requirement that non-crossing pairs have $\chi_{pq} = 0$, but that will be the case in an optimal solution:
to minimize the number of crossings, we minimize the sum over all $\chi_{pq}$.

\paragraph*{Constraints.}
We handle geometry-free pairs and undecided pairs separately.

Consider a geometry-free pair of sites: if the leaders cross in the neutral embedding, we must either allow this, or rotate the lowest common ancestor.
Conversely, if they do not cross neutrally, yet we rotate the lowest common ancestor, then we must allow their leaders to cross.
Call these sets of pairs $F_\text{rotate}$ and $F_\text{keep}$ respectively, for how to prevent the crossing.
\begin{equation}
	\chi_{ij} + \rho_{ij} \geq 1 \quad\quad \forall (i,j) \in F_\text{rotate}
\end{equation}
\begin{equation}
	\chi_{ij} - \rho_{ij} \geq 0 \quad\quad \forall (i,j) \in F_\text{keep}
\end{equation}

\pagebreak[4]
For undecided pairs $(p,q)$, a three-way case distinction on $[p \prec q]$, $\rho_{pq}$, and $d_{pq}$ reveals the following geometry:
\begin{itemize}
	\item pairs with $p \prec q$ have crossing leaders if and only if $\rho_{pq} + d_{pq} = 1$; 
	\item pairs with $p \succ q$ have crossing leaders if and only if $\rho_{pq} + d_{pq} \neq 1$.
\end{itemize}
Recall that we do not force $\chi$ to be zero if there is no intersection, 
only that it is $1$ if there {\em is} an intersection.
We implement these conditions in the ILP as follows.
Let $U_\text{left} \subseteq U$ be the undecided pairs with $p \prec q$.
\begin{equation}
    \rho_{pq} - d_{pq} \leq \chi_{pq} \quad\quad \forall (p,q) \in U_\text{left}
\end{equation}
\begin{equation}
    d_{pq} - \rho_{pq} \leq \chi_{pq} \quad\quad \forall (p,q) \in U_\text{left}
\end{equation}
Conversely, let $U_\text{right} \subseteq U$ be the undecided pairs with $p \succ q$.
\begin{equation}
    \rho_{pq} + d_{pq} - 1 \leq \chi_{pq} \quad\quad \forall (p,q) \in U_\text{right}
\end{equation}
\begin{equation}
    1 - \rho_{pq} - d_{pq} \leq \chi_{pq} \quad\quad \forall (p,q) \in U_\text{right}
\end{equation}
Finally, we must ensure that each leader $s_i$ respects the $d$ variables:
the line segment from $p_i$ to~$\ell_i$ must pass by each other site in the \sarea on the correct side.
By their definition, this does not affect geometry-free pairs, but it remains to constrain the leaf placement for undecided pairs.

Observe that the $\rho$ variables together fix the leaf order, since they fix the embedding of~$T$.
Let $L_i(\rho)$ be the function that gives the x-coordinate of $\ell_i$ given the $\rho$ variables.
Note that $L_i$ is linear in each of the $\rho$ variables: 
rotating an ancestor of $\ell_i$ shifts its location leaf by a particular constant, and rotating a non-ancestor does not affect it.

\begin{figure}[tbh]
	\centering
	  \begin{subfigure}[t]{0.29 \linewidth}
		\centering
		\includegraphics[page=1]{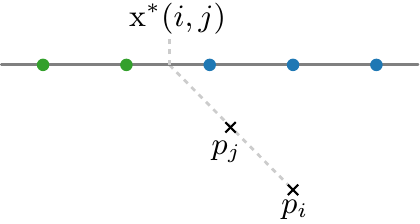}
		\caption{\sleaders{}}
		\label{fig:xstar:sub}
	  \end{subfigure}
	  \hspace{1cm}
	  \begin{subfigure}[t]{0.29 \linewidth}
		\centering
		\includegraphics[page=2]{ilpVarD}
		\caption{\poleaders{}}
		\label{fig:xstar:sub}
	  \end{subfigure}
	\caption{Defining $\x^*(i,j)$ by extending a leader from $s_i$ through $s_j$
	This partitions the leaf positions into those where the leader goes left of $s_j$ and those where it goes right.}
	\label{fig:xstar:main}
  \end{figure}

For an undecided pair $(p_i,p_j)$, consider a leader starting at $p_i$ and extending up through $p_j$:
for \sleaders{} this is the ray from $p_i$ through $p_j$, for \poleaders{} this is the vertical line through $p_j$.
Let $\x^*(i,j)$ be the x-coordinate of where this extended leader intersects the top of the map and note that this is a constant; see~\cref{fig:xstar:sub}.
If $d_{ij} = 0$, then $\ell_i$ must be to the left of this intersection; if $d_{ij} = 1$, it must be to the right.
We model this in the ILP with two constraints and the \emph{big-M method}, where it suffices to set $M = n$.
\begin{equation}
	L_i(\rho) - d_{ij}M       \leq \x^*(i,j) \quad\quad \forall (p_i,p_j) \in U
\end{equation}
\begin{equation}
	L_i(\rho) + (1 - d_{ij})M \geq \x^*(i,j) \quad\quad \forall (p_i,p_j) \in U
\end{equation}
This completes the ILP.

The number of variables and constraints are both quadratic in $n$.
Just counting the $\chi$ variables already gives this number, but we note that
in particular the number of undecided pairs leads to additional variables (and seemingly more complicated constraints).

\subsection{Heuristics} 
\label{sec:heuristics} 
Since the ILP from the previous section can be slow in the worst case and requires advanced solver software,
we now suggest a number of heuristics.

\paragraph{Bottom-Up.}
First, we use a dynamic program similar to the one in \cref{sec:internal} 
and commit to an embedding for each subtree while going up the tree.
At this point we note that counting the number of crossings is not 
a leaf additive quality measure in the sense of \cref{sec:internal}.
However, \cref{eq:dp} does enable us to introduce an additional cost based on 
where an entire subtree is placed and where its sibling subtree is placed 
--~just not minimized over the embedding of these subtrees.
More precisely, for an inner vertex~$v$ of~$T$ with children~$x$ and~$y$,
let~$C(x,y,i)$ be the number of crossings between~$T(x)$ and~$T(y)$ 
when placed starting at position~$i$ and~$i+n(x)$ respectively; this can be computed in~$\Oh(n(v)^2)$ time.
Note that this ignores any crossings with leaders from other parts of the tree.
With base case~$H(\ell, i) = 0$ for every leaf~$\ell$,
we use
\begin{equation*}
	H(v, i) = \min\set{\quad H(x, i) + H(y, i + n(x)) + C(x,y,i),\quad H(y, i) + H(x, i + n(y)) + C(y,x,i)\quad }
\end{equation*}
\noindent
to pick a rotation of~$T(v)$.
Since $H$ can be evaluated in~$\Oh(n^2)$ time, the heuristic runs in~$\Oh(n^4)$ time total.
The example in \cref{fig:heuristic:notOpt} demonstrates that this does not minimize the total number of~crossings.

\begin{figure}[tbh]
  \centering
  	\begin{subfigure}[t]{0.29 \linewidth}
		\centering
		\includegraphics[page=1]{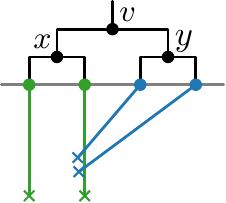}
		\caption{When~$T(x)$ and~$T(y)$ each have zero crossings, \ldots}
		\label{fig:heuristic:notOpt:left}
	\end{subfigure}
	\quad
    \begin{subfigure}[t]{0.29 \linewidth}
		\centering
		\includegraphics[page=2]{heuristicNotOptimal}
		\caption{\ldots then~$T(v)$ has two crossings.}
		\label{fig:heuristic:notOpt:right}
	\end{subfigure}
	\quad
	\begin{subfigure}[t]{0.29 \linewidth}
		\centering
		\includegraphics[page=3]{heuristicNotOptimal}
		\caption{The optimal leaf order for~$T(v)$ has one crossing in~$T(x)$.}
		\label{fig:heuristic:notOpt:opt}
	\end{subfigure}
  \caption{The bottom-up heuristic is not always optimal:
  combining the locally best leaf orders for~$T(x)$ and~$T(y)$ might 
  not result in the minimum number of leader crossings for~$T(v)$.}
  \label{fig:heuristic:notOpt}
\end{figure}

\paragraph{Top-Down.}
The second heuristic traverses $T$ from top to bottom (i.e.\ in pre-order)
and chooses a rotation for each inner vertex $v$ based 
on how many leaders would cross the vertical line between the two subtrees of $v$; see \cref{fig:heuristic:topDown}.
More precisely, suppose that $T(v)$ has its leftmost leaf at position~$i$ 
based on the rotations of the vertices above $v$.
For $x$ and $y$ the children of $v$,
consider the rotation of $v$ where $T(x)$ is placed starting at position~$i$ and 
$T(y)$ is placed starting at position $i + n(x)$.
Let $s$ be the x-coordinate in the middle between the last leaf of $T(x)$ and 
the first leaf of $T(y)$.
We compute the number of leaders of $T(v)$ that cross the vertical line at $s$
and for the reserve rotation of $v$;
the smaller result is chosen and the rotation fixed.
This procedure considers each site at most $\Oh(n)$ times and thus runs in $\Oh(n^2)$ time. 

\begin{figure}[tbh]
	\centering
	\begin{subfigure}[t]{.44 \linewidth}
		\centering
		\includegraphics[page=1]{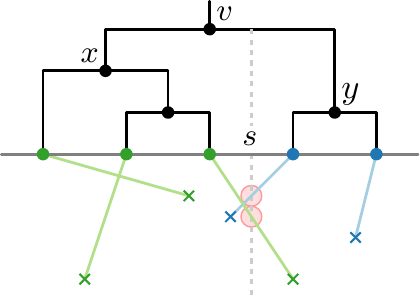}
		\caption{$T(x)$ and $T(y)$ have one site each on the other side
		of the vertical line through $s$.}
	\end{subfigure}
	\hfill
	\begin{subfigure}[t]{.44 \linewidth}
		\centering
		\includegraphics[page=2]{heuristicTopDown}
		\caption{$T(x)$ has one and  $T(y)$ has two sites on the  other side
		of the vertical line through $s$.}
	\end{subfigure}
  \caption{The top-down heuristic tries both rotations of $v$ and 
  here would pick (a).}
  \label{fig:heuristic:topDown}
\end{figure}

\paragraph{Leaf-Additive Dynamic Programming.}
Thirdly, we could optimize any of the quality measures for interior labeling (\cref{sec:internal}).
These measures produce generally sensible leaf orders in quadratic time 
and we may expect the number of leader crossings to be low.

\paragraph{Greedy (Hill Climbing).}
Finally, we consider a hill climbing algorithm that, starting from some leaf order, 
greedily performs rotations that improve the number of crossings.
This could start from a random leaf order, a hand-made one, or from any of the other heuristics.
Evaluating a rotation can be done in $\Oh(n^2)$ time and thus one round through all vertices runs in $\Oh(n^3)$ time.

\section{Experimental Evaluation} 
\label{sec:experiments}
In this section, we evaluate the practical performance of our proposed ILP formulation and heuristics on both synthetic and real-world instances.
The experiments aim to assess the solution quality, relative performance, and computational efficiency of the methods.  
Both the code and test data are available online\footnote{GitHub geophylo repository \href{https://www.github.com/joklawitter/geophylo}{\texttt{github.com/joklawitter/geophylo}}}.

\subsection{Test Data}
We use three procedures to generate random instances.
For each type and with 10 to 100 taxa (in increments of 5), we generated 10 instances; we call these the \emph{synthetic instances}.
We stop at $100$ since geophylogeny drawings with more taxa are rarely well readable. Example instances are shown in \cref{fig:data}.
\begin{description}
    \item[Uniform] Place $n$ sites on the map uniformly at random.
	Generate the phylogenetic tree by repeating the following merging procedure.
    Pick an unmerged site or a merged subtree uniformly at random, then pick a second with probability distributed by inverse distance to the first, and merge them;
	as reference point for the sites of a subtree, we take the median coordinate on both~axes.
	\item[Coastline] Initially place all sites equidistantly on a horizontal line, 
	then slightly perturb the x-coordinates. 
	Next, starting at the central site and going outwards in both directions, 
	change the y-coordinate of each site randomly (up to $1.5$ times the horizontal distance) 
	from the y-coordinate of the previous site. Construct the tree as before.
	\item[Clustered] These instances group multiple taxa into clusters.
    First a uniformly random number of sites between three and ten is allocated for a cluster and its center is placed at a uniformly random point on the map.
	Then for each cluster, we place sites randomly in a disk around the center with size proportional to the cluster size.
	Construct $T$ as before, but first for each cluster separately and only then for the whole instance.
\end{description}
\newcommand{\instance}[1]{{\sffamily\bfseries #1}}
In addition, we consider three real world instances derived from published drawings.
\instance{Fish} is a 14-taxon geophylogeny by Williams and Johnson~\cite{realWorldExampleFish}
with 24 undecided pairs (26\% of possible pairs), which could be reduced to 14 by rotating the map.
\instance{Lizards} is a 20-taxon geophylogeny by Jauss et~al.\ \cite{externalExample},
where the sites are mostly horizontally dispersed, resulting in 38 undecided pairs (20\%, see \cref{fig:examples:external}).
\instance{Frogs} is a 64-taxon geophylogeny by Ellepola et~al.\ \cite{realWorldExampleTwo},
where the sites are rather chaotically dispersed on the map;
the published drawing of Frogs uses \sleaders and has over 680 crossings.

\begin{figure}[tbh]
  \centering
	\begin{subfigure}[t]{0.31 \linewidth}
		\centering
 		\includegraphics[width=\linewidth]{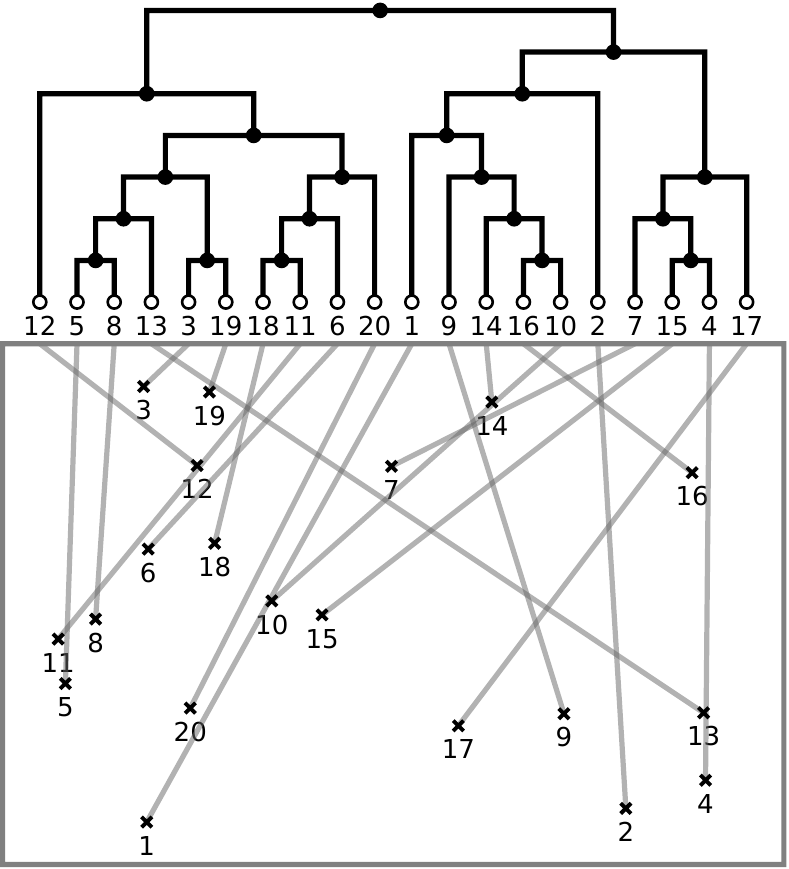}
		\caption{uniform}
		\label{fig:data:uniform}
	\end{subfigure}
	\hfill
    \begin{subfigure}[t]{0.31 \linewidth}
		\centering
 		\includegraphics[width=\linewidth]{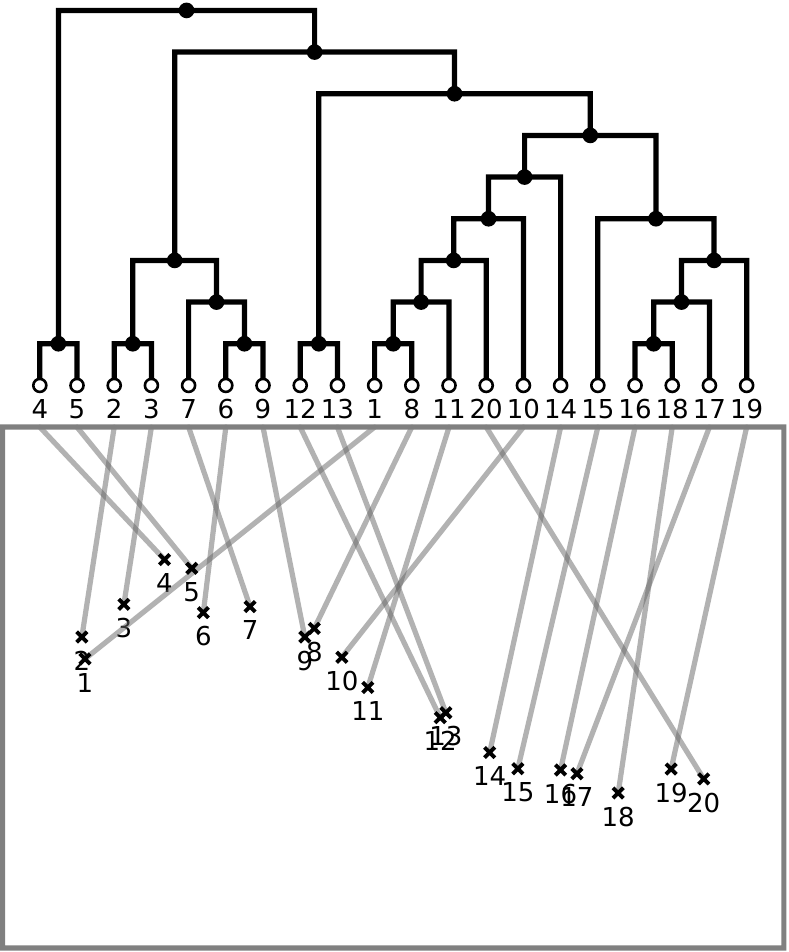}
		\caption{coastline}
		\label{fig:data:coast}
	\end{subfigure}
	\hfill
	\begin{subfigure}[t]{0.31 \linewidth}
		\centering
 		\includegraphics[width=\linewidth]{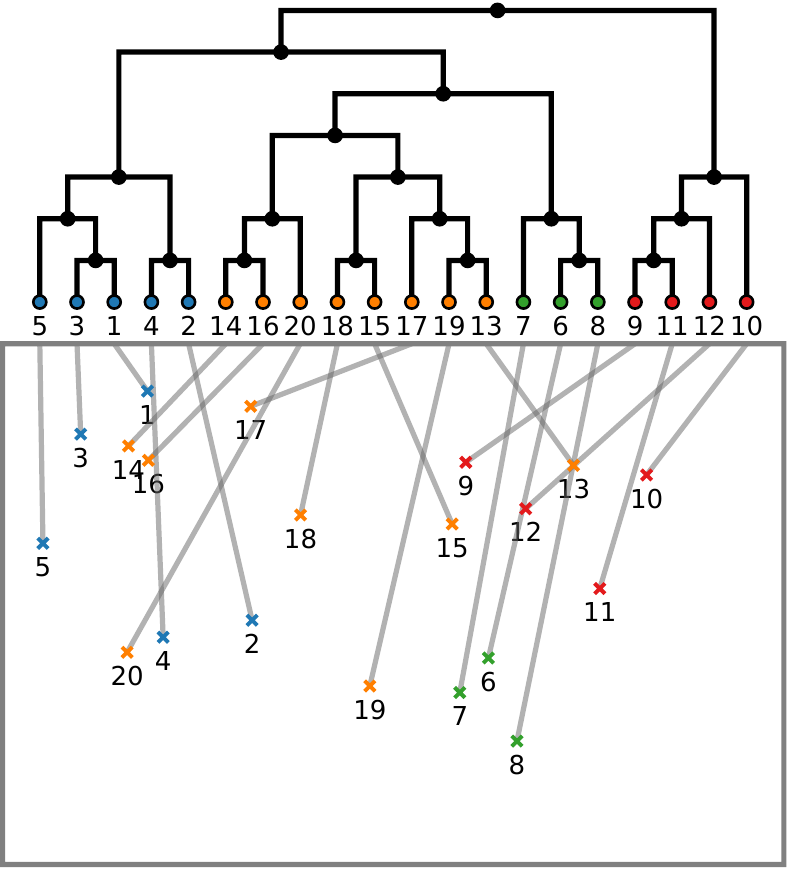}
		\caption{cluster}
		\label{fig:data:cluster}
	\end{subfigure}
  \caption{Examples of generated instances with 20 taxa, here shown with \sleaders. 
  The leaf order was computed with the Greedy hill climber.}
  \label{fig:data}
\end{figure}

\subsection{Experimental Results}
We now describe the main findings from our computational experiments.

\paragraph{The ILP is fairly quick for \sleaders{}.}
Our implementation uses Python to generate the ILP instance and Gurobi~10 to solve it; 
we ran the experiments on a 10-core Apple M1 Max processor.
The Python code takes negligible time; practically all time is spent in the ILP solver. 
As expected, we observe that the running time is exponential in $n$, but only moderately so (\cref{fig:optstats}).
Instances with up to about 50 taxa can usually be solved optimally within a second,
but for Clustered and Uniform instances the ILP starts to get slow at about 100 taxa.
We note that geophylogenies with over $100$ taxa should probably not be drawn with external labeling:
for example, the Frogs instance can be drawn optimally by the ILP in about \unit{0.5}{\second},
but even though this improves the number of crossings from the published 680 to the optimal 609, the drawing is so messy as to be unreadable (\cref{fig:result:external:frog}).
We further observe that Coastline instances are solved trivially fast, 
since with fewer undecided pairs the ILP is smaller and presumably easier to solve.

\paragraph{The ILP is noticeably slower for \poleaders{}.}
Instances with up to $25$ taxa are still drawn comfortably within a second, but at $50$ taxa the typical runtime is over a minute.
We conjecture this is due to the increased number of undecided pairs when working with \poleaders{}.

\begin{figure}
	\centering
    \begin{subfigure}{0.48\linewidth}
	\begin{tikzpicture}
	\begin{axis} [
		table/col sep=semicolon,
        width=\linewidth,
		legend pos=north west,
		ymode=log,
		xlabel={$n$},
		ylabel={Runtime [\second]}
		]
	\addplot [ only marks, mark=+, plot-uni ] table [x=n, y=uni-time] {ilp-stats.csv};
	\addlegendentry{Uniform}
	\addplot [ only marks, mark=x, plot-coast ] table [x=n, y=coast-time] {ilp-stats.csv};
	\addlegendentry{Coastline}
	\addplot [ only marks, mark=o, plot-cluster ] table [x=n, y=cluster-time] {ilp-stats.csv};
	\addlegendentry{Clustered}
	\end{axis}
	\end{tikzpicture}
    \caption{Runtime}
    \end{subfigure}
	\hfill
    \begin{subfigure}{.48\linewidth}
	\begin{tikzpicture}
		\begin{axis} [
			table/col sep=semicolon,
            width=\linewidth,
			legend pos=north west,
			xlabel={$n$},
			ylabel={OPT}
			]
		\addplot [ only marks, mark=+, plot-uni ] table [x=n, y=uni-xing] {ilp-stats.csv};
		\addlegendentry{Uniform}
		\addplot [ only marks, mark=x, plot-coast ] table [x=n, y=coast-xing] {ilp-stats.csv};
		\addlegendentry{Coastline}
		\addplot [ only marks, mark=o, plot-cluster ] table [x=n, y=cluster-xing] {ilp-stats.csv};
		\addlegendentry{Clustered}
		\end{axis}
		\end{tikzpicture}
        \caption{Number of crossings.}
    \end{subfigure}
	\caption{
		Computing optimal \sleader{} drawings using the ILP.
		}
	\label{fig:optstats}
\end{figure}
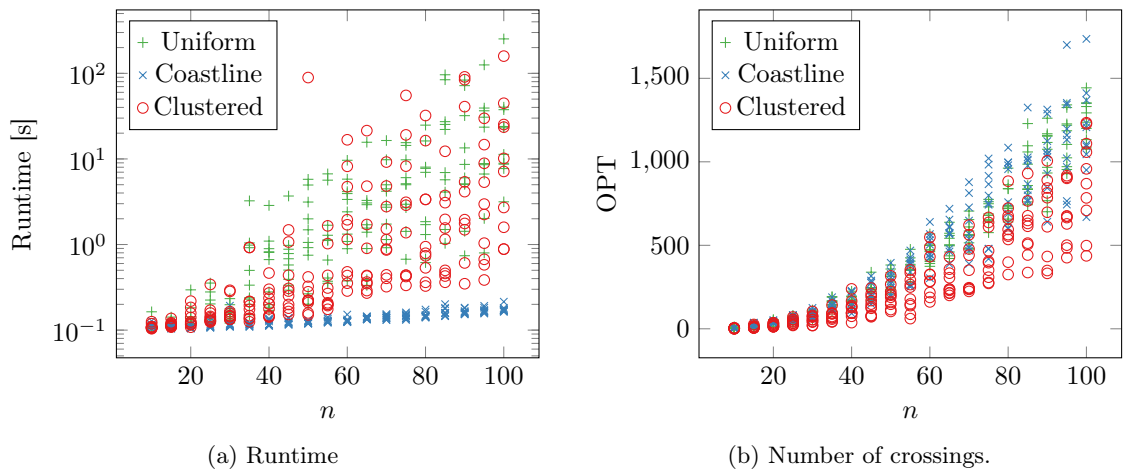

\paragraph{The synthetic instances have a superlinear number of crossings.}
The Clustered instances can be drawn with significantly fewer crossings than Uniform:
this matches our expectation, as by construction there is more correlation between the phylogenetic tree and the geography of the sites.
More surprisingly we find that the Coastline instances require many crossings.
We may have made these geophylogenies too noisy, but this observation does warn of the generally quadratic growth in the number of crossings,
which makes external labeling unsuitable for large geophylogenies unless the geographic correlation is exceptionally good.

\paragraph{The heuristics run instantly and Greedy is often optimal.}
The heuristics are implemented in single-threaded Java code;
we ran the experiments on an average 4-core laptop.
Bottom-Up, Top-Down and Leaf-Additive all run instantly.
Even the Greedy hill climber finds a local optimum in a fraction of a second, both when starting with a random leaf order or from any of the other heuristics.
Of the first three heuristics, Bottom-Up consistently achieves the best results for both \sandpoleaders. 
Comparing the best solution by these heuristics with the optimal drawing (\cref{fig:excessxing}), 
we observe that the number crossings in excess of the optimum increases with the number of taxa, in particular for Uniform and Clustered instances;
Coastline instances are always drawn close to optimally by at least one heuristic.
The Greedy hill climber almost always improves this to an optimal~solution.

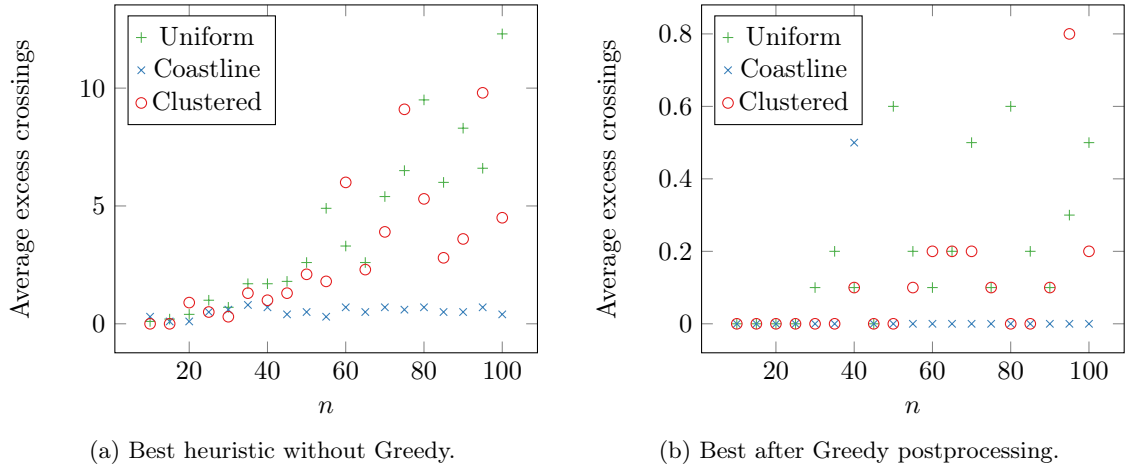
\begin{figure}
    \centering
    \begin{subfigure}{0.48\linewidth}
    \begin{tikzpicture}
    \begin{axis} [
        table/col sep=semicolon,
        width=\linewidth,
        legend pos=north west,
        xlabel={$n$},
        ylabel={Average excess crossings}
        ]
    \addplot [ only marks, mark=+, plot-uni ] table [x=n, y=uni-avg-heur-loss] {data-average.csv};
    \addlegendentry{Uniform}
    \addplot [ only marks, mark=x, plot-coast ] table [x=n, y=coast-avg-heur-loss] {data-average.csv};
    \addlegendentry{Coastline}
    \addplot [ only marks, mark=o, plot-cluster ] table [x=n, y=cluster-avg-heur-loss] {data-average.csv};
    \addlegendentry{Clustered}
    \end{axis}
    \end{tikzpicture}
    \caption{Best heuristic without Greedy.}
    \end{subfigure}
    \hfill
    \begin{subfigure}{.48\linewidth}
    \begin{tikzpicture}
        \begin{axis} [
            table/col sep=semicolon,
            width=\linewidth,
            legend pos=north west,
            xlabel={$n$},
            ylabel={Average excess crossings}
            ]
        \addplot [ only marks, mark=+, plot-uni ] table [x=n, y=uni-avg-greed-loss] {data-average.csv};
        \addlegendentry{Uniform}
        \addplot [ only marks, mark=x, plot-coast ] table [x=n, y=coast-avg-greed-loss] {data-average.csv};
        \addlegendentry{Coastline}
        \addplot [ only marks, mark=o, plot-cluster ] table [x=n, y=cluster-avg-greed-loss] {data-average.csv};
        \addlegendentry{Clustered}
        \end{axis}
        \end{tikzpicture}
        \caption{Best after Greedy postprocessing.}
    \end{subfigure}
    \caption{Average number of \sleader{} crossings made by the best heuristic minus the number of \sleader{} crossings in the optimal drawing, 
    averaged over 10 random instances per value of $n$.}
    \label{fig:excessxing}
\end{figure}

\paragraph{For the number of crossings, \poleaders are promising.}
Our heuristics require on average only about 73\% as many crossings when using \poleaders compared to \sleaders (55\% for Coastline instances);
the Lizard example in \cref{fig:examples:external} requires 11 \sleader crossings but only 2 \poleader crossings. 
We therefore propose that \poleaders deserve more attention from the phylogenetic community.

\paragraph{Algorithmic recommendations.}
Our results show that the ILP is a good choice for geophylogeny drawings with external labeling.
If no solver software is at hand or it is technically challenging to set up (for example when making an app that runs locally in a user's web browser), 
then the heuristics offer an effective and efficient alternative, especially Bottom-Up and Greedy.

For the Fish instance, for example, we found that the drawing with \sleaders and 17 crossings in \cref{fig:result:external:fish}
is a good alternative to the internal labeling used in the published drawing~\cite{realWorldExampleFish}.
However, for instances without a clear structure or with many crossings,
it might be better to use internal labeling.
Alternatively, the tree could be split like Tobler et~al.\ \cite{exampleSplit},
such that different subtrees are each shown with the map in separate drawings.

\begin{figure}[t]
  \centering
	\begin{subfigure}[t]{0.5 \linewidth}
		\centering
 		\includegraphics[height=6cm]{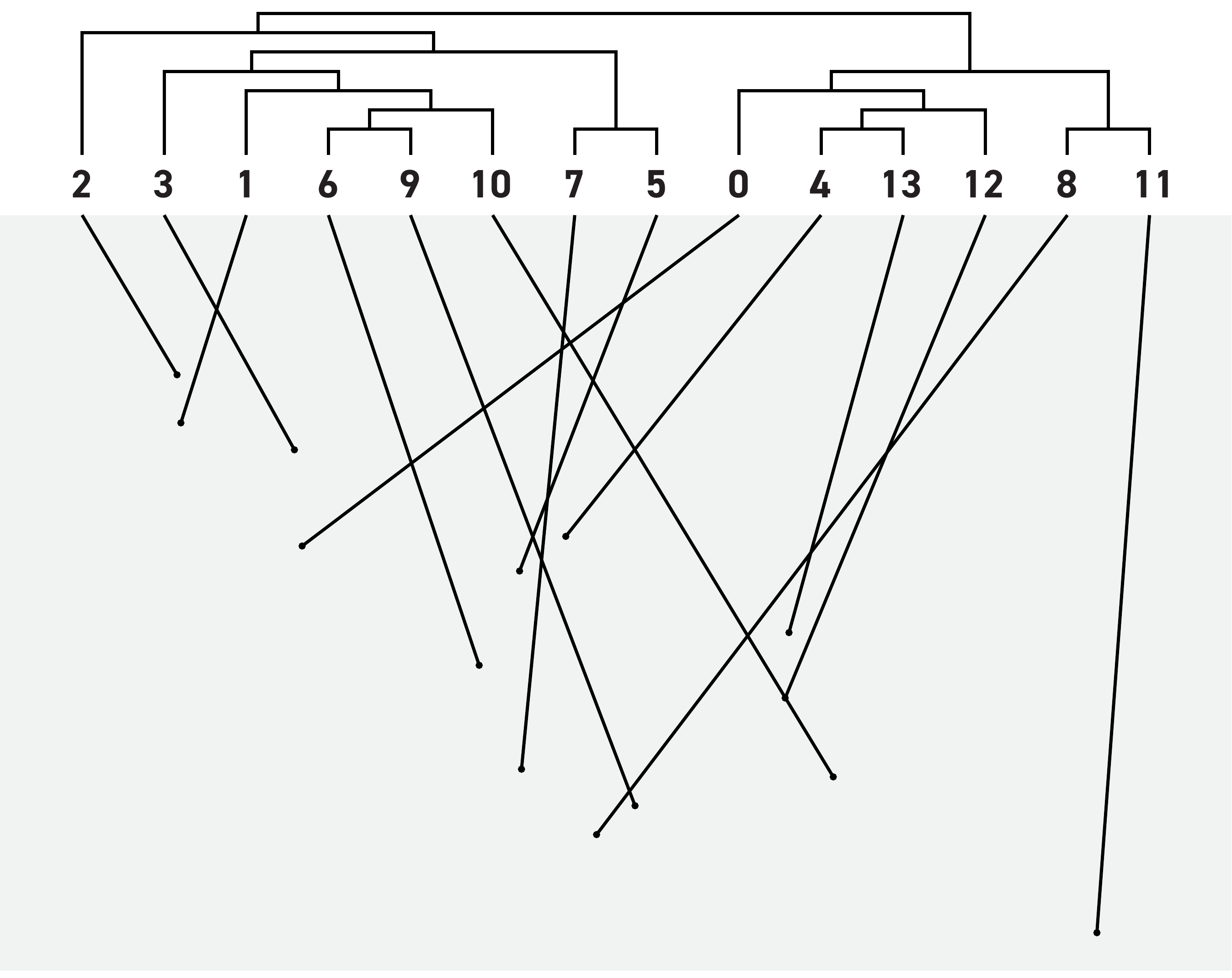}
		\caption{Drawing of \textbf{\sffamily Fish} with 17 crossings.}
		\label{fig:result:external:fish}
	\end{subfigure} 
	\hfill
    \begin{subfigure}[t]{0.4 \linewidth}
		\centering
 		\includegraphics[height=6cm]{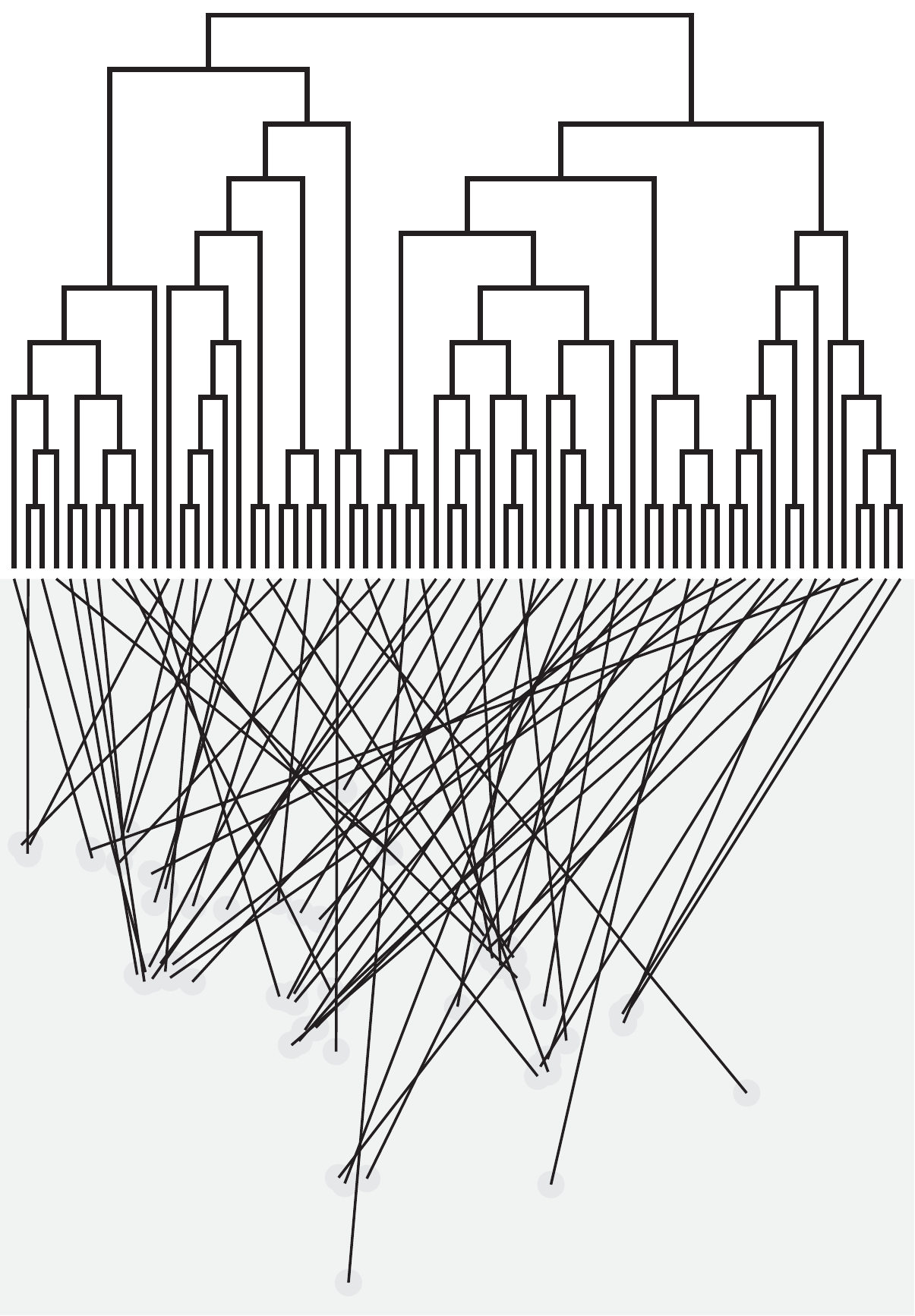}
		\caption{Drawing of \textbf{\sffamily Frogs} with 609 crossings.}
		\label{fig:result:external:frog}
	\end{subfigure}
  \caption{Crossing-optimal drawings of Fish and Frogs with \sleaders.}
  \label{fig:result:external}
\end{figure}

\section{Discussion and Open Problems} 
\label{sec:conclusion}
In this paper, we have shown that drawings of geophylogenies can be approached theoretically and practically
as a problem of algorithmic map labeling.
We formally defined a drawing style for geophylogenies
that uses either internal labeling with text or colors,
or that uses external labeling with \spoleaders.
This allowed us to define optimization problems that can be tackled algorithmically.  
For drawings with internal labeling, we introduced a class of quality measures
that can be optimized efficiently and even interactively provided with user hints.
In practice, designers can thus try different quality measures, pick their favorite, 
and make further adjustments easily even for large instances.
For external labeling, minimizing the number of leader crossings is NP-hard in general.
Crossing free-instances on the hand can be found in polynomial time,
yet our algorithm still runs only in $\Oh(n^6)$ time.
Furthermore, for drawings with \sleaders, we showed that if the sites lie relatively close to a horizontal line
then in the best scenario an $\Oh(n \log n)$-time algorithm and otherwise an FPT algorithm can be used.
While we found similar results for drawings with \poleaders, 
it seems unlikely that geophylogenies arising in practice have the required properties.
Hence, we provide multiple algorithmic approaches to solve this problem 
and demonstrated experimentally that they perform well in practice.

Even though we have provided a solid base of results,
we feel the algorithmic study of geophylogeny drawings holds further promise by varying,
for example, the type of leader used, the quality measure, the composition of the drawing,
or the nature of the phylogeny and the map.
Several of these directions show parallels to the variations found in boundary labeling~problems. 
We finish this paper with several suggestions for future work.

One might consider \texttt{do}- and \texttt{pd}-leaders, 
which use a diagonal segment and can be aesthetically pleasing; see \cref{fig:do}. 
We expect that some of our results (such as the NP-hardness of crossing minimization 
and the effectiveness of the heuristics) should hold for these leader types.
The boundary labeling literature~\cite{BNN22} studies even further types, 
such as \texttt{opo} and Bézier, and these might be more challenging to adapt.

\begin{figure}[tbh]
  \centering
		\includegraphics[page=4]{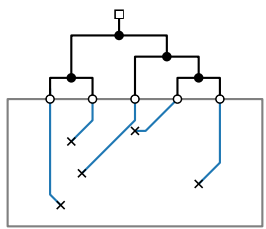}
  \caption{Drawing of a geophylogeny with \texttt{do}- and \texttt{pd}-leaders.}
  \label{fig:do}
\end{figure}

For external labeling we have only considered the total number of crossings.
If different colors are used for the leaders of different clades 
or if the drawing can be explored with an interactive tool,
one might want to minimize the number of crossings within each clade (or a particular clade).
Furthermore, one might optimize crossing angles or insist on a minimum distance between leaders and unrelated sites.
While we provided heuristics to minimize leader crossings,
the development of approximation algorithms, 
which exist for other labeling problems~\cite{LKY08,BKPS11}, could be of interest.

Our model of a geophylogeny drawing can be expanded as well.
One might allow the orientation of the map to be freely rotated, the extent of the map to be changed,
or the leaves to be placed non-equidistantly.
Optimizing over these additional freedoms poses new algorithmic challenges.
Straying further from our model, some drawings in the literature 
have a circular tree around the map~\cite{exampleCircular,exampleCircularTwo}; see \cref{fig:circular}.
This is similar to contour labeling in the context of map labeling~\cite{NNR17}. 
Also recall that \cref{fig:baseExample} has area features.
Our quality measures for internal labeling are easily adapted to handle this, 
but (as is the case with general boundary labeling~\cite{BKPS10}) 
area features provide additional algorithmic challenges for external labeling.  
The literature contains many drawings where multiple taxa correspond 
to the same feature on the map~\cite{exampleManyToOne} (see also again \cref{fig:circular})
and where we might want to look to many-to-one boundary labeling~\cite{LKY08,BCFHKNRS15}.
Furthermore, one can consider non-binary phylogenetic trees and phylogenetic networks.

\begin{figure}[tbh]
  \centering
  \includegraphics{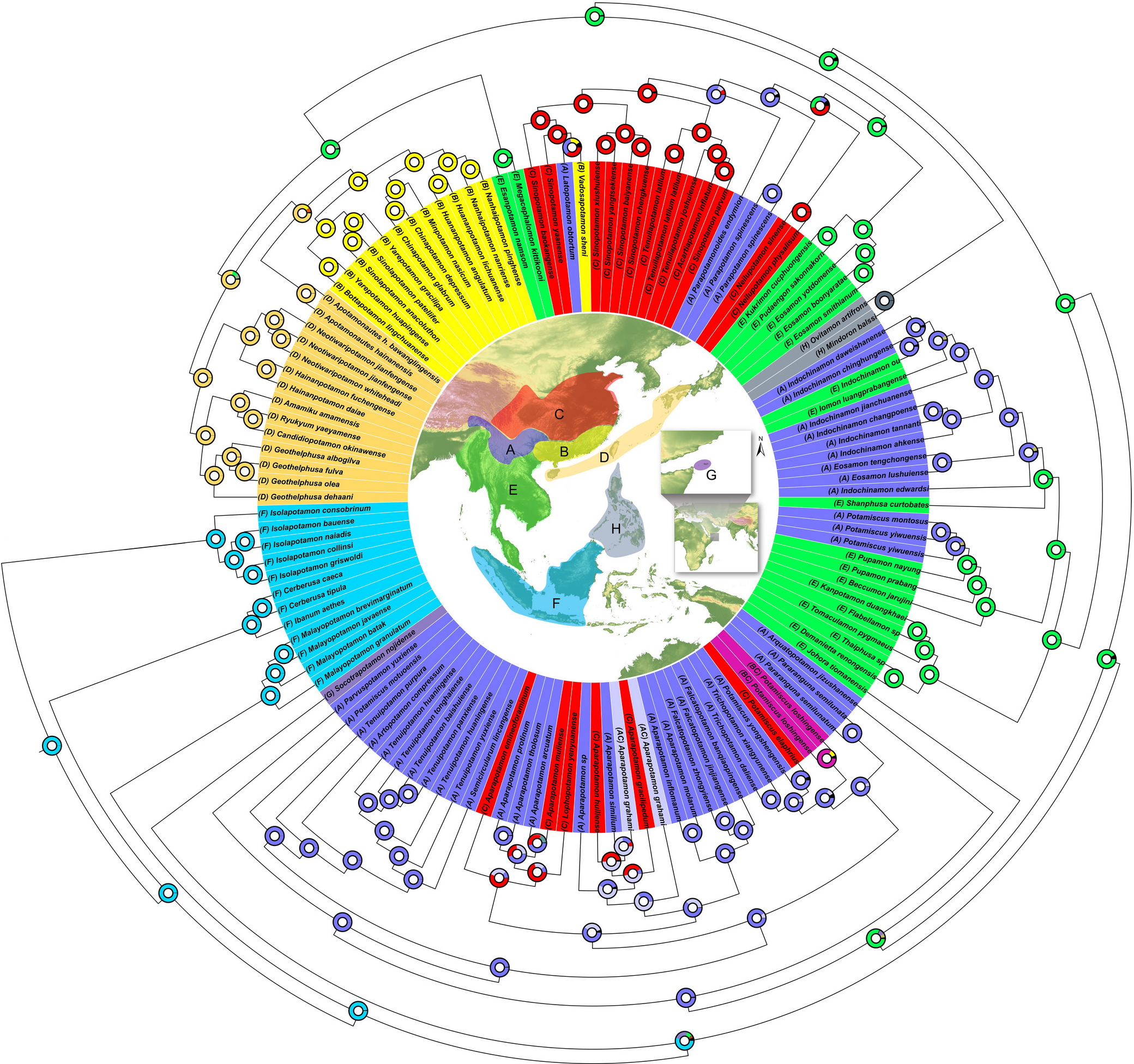}
  \caption{A drawing of a geophylogeny by Pan et~al.\ \cite{exampleCircular}, 
  where the tree is drawn circularly
  around the map and instead of a site per taxa only regions for clades are given.}
  \label{fig:circular}
\end{figure}

Lastly, we note that side-by-side drawings can also be used
for a phylogenetic tree together with a diagram other than a map:
Chen et~al.\ \cite{exampleDiagram} combine it with a scatter plot;
Gehring et~al.\ \cite{exampleThreeFigs} even combine three items (phylogenetic tree, haplotype network, and map).

\section*{Acknowledgements}
We thank the reviewers for their helpful comments and suggestions.

\clearpage
\pdfbookmark[1]{References}{References}
\bibliographystyle{abbrvurl}
\bibliography{sources}

\end{document}